\documentclass[10pt,twocolumn]{article}

\usepackage[utf8]{inputenc}
\usepackage[T1]{fontenc}
\usepackage{amsmath}
\usepackage{amssymb} 
\usepackage{amsfonts}
\usepackage{amsthm}

\usepackage[varg]{newtxmath} 

\usepackage[margin=0.75in]{geometry}
\usepackage{microtype}
\usepackage{graphicx}
\usepackage{booktabs}
\usepackage{xcolor}
\usepackage{hyperref}
\hypersetup{colorlinks=true, linkcolor=blue, citecolor=blue, urlcolor=blue}

\usepackage[numbers,sort&compress]{natbib}

\usepackage[blocks]{authblk}

\usepackage{mathtools}
\usepackage{enumitem}
\usepackage[capitalize,noabbrev]{cleveref}

\theoremstyle{plain}
\newtheorem{theorem}{Theorem}[section]
\newtheorem{proposition}[theorem]{Proposition}
\newtheorem{lemma}[theorem]{Lemma}

\theoremstyle{definition}

\theoremstyle{remark}

\newcommand{\R}{\mathbb{R}}

\newcommand{\Var}{\mathrm{Var}}
\newcommand{\ket}[1]{|#1\rangle}
\newcommand{\bra}[1]{\langle#1|}

\title{\Large \textbf{Wavelet Variance Equipartition as a Threshold for\\ World-Model Quality and Quantum Kernel TN-Simulability}}

\author[1,2]{\textbf{Chon-Fai Kam}\thanks{Corresponding author: kam.chonfai@gmail.com}}
\author[3]{\textbf{Xavier Cadet}}
\author[4]{\textbf{Miloud Bessafi}}
\author[1,5]{\textbf{Frederic Cadet}\thanks{frederic.cadet.run@gmail.com}}

\affil[1]{\small University Paris City \& University of Reunion, Paris, France}
\affil[2]{\small Dipartimento di Fisica e Chimica Emilio Segrè, Università degli Studi di Palermo, Palermo, Italy}
\affil[3]{\small Thayer School of Engineering, Dartmouth College, Hanover, NH 03755, USA}
\affil[4]{\small EnergyLab, University of Reunion, Saint-Denis, France}
\affil[5]{\small PEACCEL, AI for Biologics, Paris, France}

\date{}

\begin{document}
\maketitle

\begin{abstract}
While world models excel at learning compact representations of complex environments, they lack a principled, physics-grounded metric for assessing the structural fidelity of their latent spaces. We identify the wavelet scaling exponent $\alpha$ as a critical diagnostic for this structural regularity, proposing that optimal representations should satisfy \emph{variance equipartition} ($\alpha \approx 1/2$)---a condition mirroring the constant energy flux within Kolmogorov's inertial range. We formally establish $\alpha = 1/2$ as a sharp transition boundary for the classical simulability of amplitude-encoded quantum kernels. Using tensor-network theory, we prove that latents with $\alpha > 1/2$ reside in an area-law phase admitting efficient classical emulation, whereas $\alpha < 1/2$ triggers a volume-law phase where the required Matrix Product State bond dimension $\chi$ grows exponentially with the qubit count $n$. Empirical analysis of pre-trained VideoMAE latents reveals a fundamental dichotomy: while spatial token sequences approach the physical equipartition limit ($\alpha \approx 0.423$), the permutation-invariant feature channels exhibit unstructured disorder ($\alpha \approx -0.123$). This channel-wise complexity forces real-world latents deep into the volume-law phase, providing a data-driven necessary structural condition for tensor-network simulation hardness. Finally, to rigorously quantify the measurement overhead of high-dimensional quantum representations, we apply Weingarten calculus to derive the exact analytical variance of the scrambled transition probability under a 2-design ensemble. We prove that this variance scales strictly as $\Var[X] = \Theta(d^{-2})$. We confirm this exact scaling numerically with a log-log slope of $-1.881$ ($R^2 = 0.999$), explicitly identifying a formidable ``shot-noise wall'' demanding a critical measurement budget of $M = \Omega(d^2)$ that constrains the scalability of quantum machine learning.
\end{abstract}

\section{Introduction}
\label{sec:intro}

World models---systems that learn compact, persistent representations of physical environments---have emerged as a central paradigm in machine learning \citep{lecun2022path,bruce2024genie}. Architectures such as the Joint Embedding Predictive Architecture (JEPA) \citep{assran2023self} produce latent vectors $z \in \mathbb{R}^d$ that encode multi-scale structure spanning local geometric features to global causal invariants, without requiring pixel-level reconstruction. Despite these impressive capabilities, evaluating the quality of such representations remains a fundamental challenge: current metrics are task-specific and dataset-dependent, and provide no insight into whether the internal representation has captured the hierarchical, scale-invariant organisation that makes the physical world predictable.

We propose a physics-grounded criterion to fill this gap. Physical environments at high Reynolds numbers are governed by Navier--Stokes dynamics whose solutions exhibit a characteristic multi-scale energy cascade. In Kolmogorov's inertial range \citep{kolmogorov1941local}, energy transfers from large to small scales with constant flux, producing \emph{variance equipartition} across scales when viewed in a multi-resolution basis. We propose that a world model has genuinely internalised physical structure when its latent vectors exhibit the same scaling signature: detail coefficients at dyadic scale $k$ decay as $2^{-\alpha k}$ with $\alpha \approx 1/2$. This is not merely an analogy---we show in Section~\ref{sec:background} that $\alpha = 1/2$ uniquely minimises the integrated squared bias of a downstream linear probe, providing a rigorous optimality statement alongside the physical intuition.

This threshold emerges as a fundamental phase boundary from the perspective of quantum complexity. Quantum kernel methods offer a route to exploit high-dimensional latent structure by amplitude-encoding $z$ into $n = \lceil\log_2 d\rceil$ qubits. We establish $\alpha = 1/2$ as the entanglement phase transition: when $\alpha < 1/2$, the singular value spectrum decays slowly enough to force a \emph{volume-law} phase, rendering classical tensor-network (TN) simulation exponentially hard in the qubit count $n$ (or polynomially expensive in the Hilbert dimension $d$). Conversely, $\alpha \approx 1/2$ (observed in our spatial token analysis, $\hat{\alpha} \approx 0.423$) sits precisely at the critical threshold of classical simulability. Empirical evidence reveals that current unstructured feature channels ($\hat{\alpha} \approx -0.123$) naturally trigger this volume-law protection, shielding them from classical dequantization.

However, exploiting this structural protection introduces a fundamental dilemma. To rigorously assess the measurement overhead of evaluating these latents on actual hardware, we apply Weingarten calculus to derive the exact analytical variance of the scrambled transition probability, $X = |\bra{\phi} U \ket{\psi}|^2$, under a 2-design ensemble. We prove that this variance scales strictly as $\Var[X] = \Theta(d^{-2})$, a result we confirm numerically with a log-log slope of $-1.881$ ($R^2 = 0.999$). This precise scaling explicitly establishes a formidable ``shot-noise wall,'' defining a critical shot-budget overhead $M = \Omega(d^2)$. This clarifies the inherent \emph{barren-plateau tension}: the very scrambling properties that provide structural protection against classical emulation simultaneously impose a severe measurement overhead that constrains scalability at large $n$.

Throughout, we maintain strict boundaries on our claims. We do not claim universal \#P-hardness for these kernels, which remains conditional on unproven anticoncentration conjectures. Instead, we establish a mathematically rigorous, data-driven necessary condition for TN-simulation hardness. Our dual-dimension analysis on real VideoMAE latents bridges the gap between theoretical quantum complexity and empirical representation learning, confirming that the internal channel-wise structure of world models provides the necessary complexity to support potential quantum advantage.

\section{Background}
\label{sec:background}

World models serve as a foundational paradigm in autonomous machine intelligence, aiming to learn compressed and predictive representations of physical environments \citep{lecun2022path, bruce2024genie}. Within this scope, architectures like the Joint Embedding Predictive Architecture (JEPA) \citep{assran2023self} generate latent vectors $z \in \R^d$ that are hypothesized to encode the multi-scale causal structure of the environment. Despite their empirical success, quantifying the degree to which these latent spaces preserve the hierarchical, scale-invariant properties of natural data remains an open question. A mathematically rigorous approach to characterizing this structure is provided by the discrete wavelet transform, which decomposes a latent vector into detail coefficients $\delta_k$ at dyadic scales $k = 1, \ldots, \lfloor \log_2 d \rfloor$. By defining the per-scale energy contribution as $E_k \propto 2^k \cdot \Var(\delta_k)$, and given that $\Var(\delta_k) \sim 2^{-2\alpha k}$, one identifies $\alpha = 1/2$ as the unique exponent for which the energy is equipartitioned across all scales. This condition of variance equipartition is deeply linked to the statistics of natural signals and finds a physical analogue in the constant energy flux observed in the inertial range of Kolmogorov’s turbulence theory \citep{kolmogorov1941local}.

Under the Wavelet Fisher criterion, it can be shown that such an equipartition minimizes the integrated squared bias of downstream linear probes, suggesting that $\alpha = 1/2$ represents an information-theoretic optimality target for world-model latents. However, empirical analysis suggests a stark dichotomy between different representational dimensions: while spatial token sequences often internalize physical structures near this equipartition limit, feature channels frequently collapse toward unstructured noise. This distinction is critical for quantum processing, as the ordering and regularity of the latent vector $z$ directly dictate the entanglement phase of its quantum embedding.

Quantum kernel methods (QKMs) offer a unique framework for analyzing these high-dimensional representations by mapping classical data into a Hilbert space via fixed quantum feature maps \citep{schuld2021supervised, tanner2026non}. For a $d$-dimensional latent $z$, amplitude encoding prepares the $n$-qubit state $\ket{\psi(z)} = \frac{1}{\|z\|} \sum_{i=0}^{2^n-1} z_i \ket{i}$, where $n = \lceil \log_2 d \rceil$. By utilizing non-variational architectures, these methods circumvent the specific barren plateau pathologies that typically afflict gradient-based variational algorithms \citep{tanner2026non, cerezo2021variational}. The kinematic limits of such embeddings are governed by the Lieb--Robinson bound \citep{lieb1972finite}, which implies that a circuit depth of $L = \Omega(\log d)$ is a necessary condition for generating the long-range entanglement required to fully correlate the qubit register. However, as noted in \citet{tanner2026non}, the susceptibility of 1D architectures to classical dequantization via matrix product state (MPS) methods depends on the entanglement entropy of the state. This establishes a fundamental tension between the kinematic depth of the circuit and the geometric structure of the data, setting the stage for an investigation into how the latent scaling $\alpha$ dictates not only the transition between classically simulable and quantum-advantaged regimes, but also the fundamental measurement overhead required to evaluate these representations.

\section{The $\alpha$-Criticality Phase Transition}
\label{sec:connection}

The computational complexity of evaluating a quantum kernel is fundamentally dictated by the entanglement structure of the amplitude-encoded state $\ket{\psi(z)}$. When a latent vector $z$ is embedded into $n$ qubits, its bipartite entanglement is determined by the singular value decomposition (SVD) of the matrix unfolding $M$ at the middle bipartition \citep{orus2014practical}. The wavelet exponent $\alpha$ serves as the sharp phase boundary for this complexity, mapping the classical regularity of the latent directly onto the quantum simulable limit of the kernel. 

Within this framework, the latent vector $z$ can be viewed as the discrete sampling of an underlying continuous function residing in a fractional Sobolev space $W^{s,2}$. Intuitively, a Sobolev space provides a rigorous mathematical framework to quantify the ``smoothness'' of a function. Unlike standard $L^2$ spaces that only measure the total energy of a signal, the space $W^{s,2}$ additionally bounds the energy of its fractional derivatives up to order $s$. Crucially, this standard smoothness index $s$ is explicitly linked to our wavelet scaling exponent via the exact relation $s = \alpha - 1/2$ (see Appendix~C). A higher value of $\alpha$ (and thus $s$) indicates a smoother, more highly correlated signal, whereas a lower $\alpha$ corresponds to jagged, irregular, or fractal-like structures. In the context of representation learning, this $\alpha$ index dictates whether a world model has successfully compressed its observations into smooth, macroscopic features, or if its latent space remains entangled with chaotic, high-frequency noise.

To accurately extract this $\alpha$ regularity from empirical data, the choice of the analyzing wavelet basis is critical. The simplest basis, the Haar wavelet, is piecewise constant and possesses only a single vanishing moment. While computationally inexpensive, its inherent discontinuity can severely underestimate the smoothness of physically-consistent signals, artificially injecting high-frequency artifacts into the variance analysis. To circumvent this limitation, we employ the Daubechies family of orthogonal wavelets, which are mathematically constructed to possess maximal smoothness for a given finite support size. 

Specifically, our variance estimation utilizes the \texttt{db4} variant, characterized by having exactly four vanishing moments. In wavelet theory, a basis function with $p$ vanishing moments is strictly orthogonal to all polynomials up to degree $p-1$. Consequently, the inner product of the \texttt{db4} wavelet with any polynomial background trend up to cubic order is exactly zero. This filtering property ensures that the wavelet coefficients are completely blind to smooth, large-scale variations, allowing them to perfectly isolate and measure the true multi-scale fluctuations of the signal. By factoring out these macroscopic trends, we ensure that the measured power-law decay of the variances faithfully reflects the intrinsic structural regularity of world-model latents, which are expected to mimic the scale-invariant statistics of natural physical systems \citep{kolmogorov1941local, lecun2022path}.

The scaling of the von Neumann entropy $S(\rho_L)$ at the middle bipartition distinguishes two fundamentally different simulation regimes based on this regularity. While ground states of 1D gapped Hamiltonians satisfy an area law \citep{hastings2007area}, amplitude-encoded classical data follow a scaling dictated by their inherent wavelet decay $\alpha$ (see Appendix~B for a formal derivation of this duality). In the area-law phase ($\alpha > 1/2$), the singular values of the unfolding matrix $M$ decay rapidly enough that the bipartite entanglement remains bounded regardless of the system size $n$, rendering the state efficiently representable by a Matrix Product State (MPS) with constant bond dimension \citep{schuch2008entropy}. Conversely, in the volume-law phase ($\alpha < 1/2$), the slower decay of wavelet coefficients forces the entropy to grow linearly with the number of qubits, $S = \Omega(n)$. Notably, the entropy is capped by $n/2$, which rules out previous claims of exponential entropy scaling based solely on Hilbert space dimensions. This entanglement structure provides a direct lower bound on the classical cost of simulation, as any MPS approximating $\ket{\psi(z)}$ to constant fidelity must possess a bond dimension $\chi \geq 2^{S(\rho_L) - O(1)}$.

\begin{theorem}[TN Simulability Threshold]
\label{thm:antideq}
Let $z$ be a latent with wavelet exponent $\alpha$ measured in a sufficiently regular basis (e.g., Daubechies-4). If $\alpha > 1/2$, the state admits an efficient MPS representation with $\chi = O(1)$, rendering the kernel classically simulable in $O(n\chi^3)$ time. If $\alpha < 1/2$, the bond dimension must grow as $\chi = \Omega(d^c) = \Omega(2^{cn})$ for some $c > 0$, rendering tensor-network based dequantization exponentially hard in the qubit count $n$ (and thus polynomially expensive in the Hilbert dimension $d$).
\end{theorem}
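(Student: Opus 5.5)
The plan is to reduce both regimes to the singular-value spectrum of the middle unfolding $M \in \R^{2^{n/2}\times 2^{n/2}}$ of $\ket{\psi(z)}$, then use the standard MPS correspondence. Under that correspondence the Schmidt rank and truncation error at each cut control $\chi$, and the von Neumann entropy at the middle cut gives the lower bound $\chi \ge 2^{S(\rho_L)-O(1)}$ already quoted in the text.

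\textbf{Step 1: transport wavelet decay into the Schmidt spectrum.} I will use the duality of Appendix~B. Write $z$ in the db4 basis, $z=\sum_{k}\sum_{j}\delta_{k,j}\psi_{k,j}$. The wavelets at scale $k$ have support of size $\sim 2^{n-k}$, and their number is $2^k$. The key observation is that for a scale-$k$ wavelet with $k\le n/2$, the matrix reshaping of $\psi_{k,j}$ at the middle cut has rank $O(1)$. This holds because a compactly supported, regular wavelet sampled on a dyadic grid factorises approximately as a coarse-index part tensored with a fine-index profile; the db4 vanishing moments and smoothness control the error. For the scales $k>n/2$, I bound the contribution to the tail in Frobenius norm by $\sum_{k>n/2}2^k\Var(\delta_k)\sim\sum_{k>n/2}2^{(1-2\alpha)k}$. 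This is the energy $E_k$ of Section~\ref{sec:background}.

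\textbf{Step 2: the case $\alpha>1/2$.} The tail sum is geometric and decreasing, so the discarded weight beyond rank $r$ decays like $r^{-(2\alpha-1)}$ uniformly in $n$. Hence a truncation to constant $\chi$ achieves any fixed fidelity at the middle cut. Repeating the argument at every cut, with the Sobolev embedding $s=\alpha-1/2>0$ of Appendix~C used to make the constants uniform, gives a uniform-$\chi$ MPS with total error controlled by a sum over the $n$ cuts. I would handle that sum either by taking per-cut error $\epsilon/n$, which only changes $\chi$ by a factor polylogarithmic in $n$, or by exploiting the geometric decay. Contracting two such MPS to evaluate the kernel $|\langle\psi(z)|\psi(z')\rangle|^2$ then costs $O(n\chi^3)$.

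\textbf{Step 3: the case $\alpha<1/2$.} Now $E_k\sim 2^{(1-2\alpha)k}$ grows with $k$, so a constant fraction of $\|z\|^2$ sits in scales $k\ge n/2$. Under a nondegeneracy hypothesis these fine-scale coefficients spread their weight over $\gtrsim 2^{cn}$ Schmidt directions. Examples of such hypotheses are approximate independence of the $\delta_{k,j}$, or at least that no low-rank structure is imposed across positions. With that spreading, $S(\rho_L)\ge cn$ and therefore $\chi\ge 2^{cn-O(1)}=\Omega(d^{c})$. The natural route is a second-moment bound: lower-bound the effective rank $\|M\|_F^4/\|MM^\top\|_F^2$ by $\Omega(2^{(1-2\alpha)n})$, then pass to Rényi-2 entropy and on to the needed lower bound.

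\textbf{Main obstacle.} The hardest step is Step 3. The wavelet exponent describes only per-scale variances, and a deterministic $z$ with those variances could still have small middle-cut rank if the coefficients are aligned, for instance all fine-scale coefficients equal to one product vector. So the lower bound must be stated for typical or generic $z$ with the given spectrum, or with an explicit decorrelation assumption. I would first try modelling $\delta_{k,j}$ as independent with the prescribed variances and show the effective rank concentrates. Separately, the rank-$O(1)$ claim for coarse wavelets in Step 1 requires care at the boundaries of the wavelet supports.
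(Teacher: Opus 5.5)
\paragraph{Verdict.}
There is a genuine gap in Step~3. It sits in the statement rather than in your strategy, and you have located it correctly: the $\alpha<1/2$ lower bound does not follow from per-scale variances. It fails at every cut, not only the middle one. Take the product state $z=x^{\otimes n}$ with $x=\cos\theta\ket{0}+\sin\theta\ket{1}$, which is a binomial multiplicative cascade.
\begin{itemize}
\item The level-$k$ Haar detail vectors are $\ket{j}\otimes\ket{-}\otimes\ket{+}^{\otimes(n-k-1)}$, so $\delta_{k,j}=\langle j|x^{\otimes k}\rangle\langle -|x\rangle\langle +|x\rangle^{n-k-1}$.
\item Hence $E_k=(1-p)p^{\,n-k-1}$ with $p=(1+\sin 2\theta)/2$.
\item The per-coefficient mean square $E_k/2^k$ is then exactly proportional to $2^{-2\alpha k}$ with $\alpha=(1+\log_2 p)/2$, which ranges over all of $(-\infty,1/2)$.
\item Yet $\chi=1$ at every cut.
\end{itemize}
A single computational basis state already gives $\alpha\approx 0$, in db4 as well as in Haar. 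So an explicit decorrelation or genericity hypothesis, as you propose, cannot be avoided.

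The paper's Appendix~B does not contain an idea you are missing. Its Lemma~B.1 asserts that scale-$k$ coefficients fill ``block-diagonal-like sectors'' contributing about $2^k$ Schmidt directions of size $2^{-\alpha k}$; that is your nondegeneracy assumption, left implicit. The appendix then switches to geometric decay $\beta^r$ for $\alpha>1/2$, which contradicts the lemma's own $\sigma_r\sim r^{-\alpha}$. Structurally your route is the paper's (unfolding, wavelet-to-Schmidt duality, entanglement-to-bond-dimension), with the hidden hypothesis made visible.

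\paragraph{Three fixes within your plan.}
First, the step ``$S(\rho_L)\ge cn$, hence $\chi\ge 2^{cn-O(1)}$'', taken from the main text, is invalid for constant-fidelity approximation.
\begin{itemize}
\item Counterexample: a spectrum with $\lambda_1=1-\delta$ and the remaining weight spread evenly over $2^{n/2}$ directions has $S=\Theta(\delta n)$, yet $\chi=1$ already achieves fidelity $1-\delta$.
\item Finish your second-moment route through the largest Schmidt coefficient instead. Any bond-$\chi$ MPS has fidelity at most $\sum_{i\le\chi}\lambda_i\le\chi\lambda_{\max}$, and $\lambda_{\max}\le(\sum_i\lambda_i^2)^{1/2}$. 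So $\chi\ge f\cdot(\text{effective rank})^{1/2}$ for target fidelity $f$.
\item Cap your target accordingly. The middle-cut effective rank cannot exceed $2^{n/2}$, so for $\alpha<1/4$ the goal $\Omega(2^{(1-2\alpha)n})$ is impossible; aim for $\Omega(2^{\min(1-2\alpha,1/2)n})$.
\end{itemize}

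Second, in Step~2, per-cut error $\epsilon/n$ with tail $r^{-(2\alpha-1)}$ forces $\chi\sim(n/\epsilon)^{1/(2\alpha-1)}$. That factor is polynomial, not polylogarithmic, in $n$, and it destroys $\chi=O(1)$. Your geometric alternative does work: at cut $\ell$ only levels $k\gtrsim\ell$ fail to factor, so the discarded weight is $\lesssim(\chi 2^{\ell})^{-(2\alpha-1)}$, which is summable in $\ell$.

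Third, in Step~1, ``each coarse wavelet has rank $O(1)$'' does not bound the rank of their sum. You need their fine-index factors to lie in a common $O(1)$-dimensional space: $\ket{+}^{\otimes n/2}$ for Haar, and the restricted scaling pieces for db4 via the refinement equation.

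\paragraph{A cleaner route for $\alpha>1/2$.}
You can sidestep the second and third issues entirely by truncating $z$ to its coarsest $K$ levels.
\begin{itemize}
\item Each periodised Haar or db4 basis vector is an MPS whose bond dimension is bounded by a constant depending only on the filter length, because the iterated filter bank acts as a carry automaton on the bits.
\item The truncation therefore has $\chi=O(2^K)$ at every cut.
\item Its squared error is $\sum_{k>K}E_k=O(2^{-(2\alpha-1)K})$, uniformly in $n$.
\end{itemize}
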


This theorem provides a rigorous refinement to the general dequantization limits. While 1D architectures are often considered classically vulnerable, these results demonstrate that latents deep in the volume-law phase provide a form of ``data-driven'' protection against dequantization. In this regime, the simulation cost $\chi$ exceeds the threshold for efficient classical processing, effectively shielding the quantum advantage from classical emulation. This theoretical foundation explains our empirical findings in Section~\ref{sec:experiments}, where we observe that the feature channels of world models typically exhibit $\alpha \approx -0.123$, placing them deep within the protected volume-law regime---a state formally associated with informational population inversion (see Appendix~E)---whereas spatial tokens with $\alpha \approx 0.423$ hover near the critical threshold, suggesting their representations are more susceptible to classical tensor-network approximation.

\section{Exact Analytical Variance Scaling}
\label{sec:variance}

In this section, we transition from the structural properties of the encoded states to the measurement complexity of processing them. In the context of quantum supremacy, the computational hardness of deep random circuits is empirically verified through Cross-Entropy Benchmarking (XEB), which fundamentally relies on estimating transition probabilities. To rigorously assess the practical resource requirements of evaluating high-dimensional world-model latents within architectures such as Quantum Extreme Learning Machines (QELMs), we must determine the exact concentration of measure properties for these projected overlaps. While general exponential concentration is a known hallmark of Haar-random states \citep{cerezo2021variational}, deriving the exact variance scaling via Weingarten calculus \citep{collins2006integration} is necessary to precisely quantify the required shot budget and identify the formidable ``shot-noise wall'' that bounds the classical post-processing phase.

To systematically evaluate these statistical properties, we rely on the mathematical framework of unitary $t$-designs. A unitary $t$-design is an ensemble of unitary matrices that perfectly mimics the uniform Haar measure over the continuous group $\mathrm{U}(d)$ up to the $t$-th statistical moment. In practice, sampling directly from the continuous Haar measure is exponentially hard, but drawing unitaries from sufficiently deep random quantum circuits---using either nearest-neighbor or long-range entangling gates---forms an approximate $t$-design \citep{harrow2009random, brandao2016local, harrow2018approximate}. This approximation allows us to analytically evaluate the expected performance of quantum machine learning architectures governed by highly expressive processing circuits.

When evaluating the polynomial moments of these random unitaries, the continuous Haar integrals can be solved exactly using the Weingarten calculus \citep{collins2006integration}. This mathematical technique maps the continuous integration over the unitary group to a discrete combinatorial sum over permutations in the symmetric group $S_t$. Specifically, the integral of a degree-$t$ polynomial composed of the matrix elements of $U$ and $U^\dagger$ is evaluated by contracting the matrix indices according to permutations $\sigma, \tau \in S_t$. These contractions are then weighted by the Weingarten function $\mathrm{Wg}(\sigma^{-1}\tau, d)$, a rational function that asymptotically depends only on the cycle structure of the permutations and the dimension of the Hilbert space $d$.

To rigorously assess the resource requirements in high-dimensional Hilbert spaces, we evaluate the exact analytical scaling of the scrambled transition probability, defined as $X = |\langle\phi|U|\psi\rangle|^2$. Here, $U$ represents a globally scrambling quantum circuit (approximating a unitary 2-design) acting on the volume-law latents. While the standard data-encoded fidelity kernel $|\langle\psi(z')|\psi(z)\rangle|^2$ defines the theoretical geometric similarity (as detailed in Appendix D), practical architectures often employ deep expressive circuits to extract task-specific features or project latents into a quantum reservoir. Evaluating these overlaps relies entirely on the statistical moments of $X$. Obtaining the exact expression for the variance $\Var[X] = \mathbb{E}[X^2] - \mathbb{E}[X]^2$ requires computing up to the second statistical moment ($t=2$). By leveraging the 2-design property of the circuit ensemble, the complex continuous integration over $U$ collapses into a finite sum over the two elements of the symmetric group $S_2$: the identity permutation $e$ and the transposition $(12)$. 

A profound consequence of this Weingarten integration is that the variance becomes strictly independent of the geometric overlap $\langle\psi|\phi\rangle$ between the initial states. Mathematically, the Haar average symmetrically eliminates all cross-terms, preserving only the normalized self-inner products (i.e., $\langle\psi|\psi\rangle = \langle\phi|\phi\rangle = 1$). Physically, this indicates that the global scrambling dynamics thoroughly erase the original geometric structure of the data. Consequently, the derived variance is not a function of the input latents but a universal property of the high-dimensional Hilbert space. This establishes a critical bottleneck: the exact same severe sampling overhead applies universally, regardless of which specific data points are evaluated.

\begin{proposition}[Haar Variance of Scrambled Transition Probabilities]
\label{prop:variance}
Let $|\psi\rangle, |\phi\rangle \in \mathbb{C}^d$ be fixed, normalised states, and let $U$ be a Haar-random unitary drawn from $\mathrm{U}(d)$. The variance of the transition probability $X = |\langle \phi | U | \psi \rangle|^2$ is given exactly by:
\begin{equation}
    \Var[X] = \frac{d-1}{d^2(d+1)} \sim \Theta(d^{-2}).
\end{equation}
\end{proposition}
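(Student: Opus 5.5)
Our plan is to compute the first two moments of $X$ and subtract. The quickest route uses the fact that, for fixed $\ket{\psi}$ and Haar-random $U$, the vector $U\ket{\psi}$ is uniformly distributed on the unit sphere of $\mathbb{C}^d$. Consequently $X = |\bra{\phi}U\ket{\psi}|^2$ has the same law as $|\langle\phi|\chi\rangle|^2$ with $\ket{\chi}$ Haar-random. By unitary invariance we may then rotate $\ket{\phi}$ to $\ket{0}$, so $X \stackrel{d}{=} |\chi_0|^2$. This already shows that the answer cannot depend on $\langle\psi|\phi\rangle$, matching the remark preceding Proposition~\ref{prop:variance}.

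To stay in the language of the section, we will also express both moments through the Weingarten calculus. For the first moment we write $X = \mathrm{Tr}\bigl[U\ket{\psi}\bra{\psi}U^\dagger \ket{\phi}\bra{\phi}\bigr]$. Then
\begin{equation}
\mathbb{E}[X] = \mathrm{Tr}\bigl[\mathbb{E}(U\ket{\psi}\bra{\psi}U^\dagger)\,\ket{\phi}\bra{\phi}\bigr] = \frac{1}{d},
\end{equation}
using the $t=1$ twirl $\mathbb{E}[U A U^\dagger] = \mathrm{Tr}(A)\,I/d$.

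For the second moment we write $X^2 = \mathrm{Tr}\bigl[(U^{\otimes 2}) (\ket{\psi}\bra{\psi})^{\otimes 2} (U^{\dagger})^{\otimes 2} (\ket{\phi}\bra{\phi})^{\otimes 2}\bigr]$ and apply the $t=2$ twirl, summing over $\sigma,\tau\in S_2=\{e,(12)\}$ with $\mathrm{Wg}(e,d)=1/(d^2-1)$ and $\mathrm{Wg}((12),d)=-1/(d(d^2-1))$. Every trace $\mathrm{Tr}[P_\sigma(\ket{\psi}\bra{\psi})^{\otimes 2}]$ equals $1$ because the state is a normalised pure state, and the same holds for $\phi$. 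This is exactly where the cross terms disappear. The sum therefore collapses to
\begin{equation}
\mathbb{E}[X^2] = 2\,\mathrm{Wg}(e,d) + 2\,\mathrm{Wg}((12),d) = \frac{2}{d^2-1}\Bigl(1-\frac{1}{d}\Bigr) = \frac{2}{d(d+1)}.
\end{equation}
Equivalently, the symmetric-subspace projector gives $\mathbb{E}[(U\ket{\psi}\bra{\psi}U^\dagger)^{\otimes 2}] = (I+\mathrm{SWAP})/(d(d+1))$, which is a useful independent check. A third check is $\mathbb{E}|\chi_0|^4 = 2/(d(d+1))$, since $|\chi_0|^2\sim\mathrm{Beta}(1,d-1)$.

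Subtracting, we get
\begin{equation}
\Var[X] = \frac{2}{d(d+1)} - \frac{1}{d^2} = \frac{2d - (d+1)}{d^2(d+1)} = \frac{d-1}{d^2(d+1)}.
\end{equation}
This is $d^{-2}(1-1/d)/(1+1/d)$, hence $\Theta(d^{-2})$. Since only second moments enter, the same result holds verbatim for any exact unitary 2-design.

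The main obstacle is bookkeeping in the $t=2$ Weingarten sum. One must index the permutation actions and the Weingarten values correctly so that all four $(\sigma,\tau)$ terms give trace factors equal to $1$. A sign or normalisation slip there would change the $(d-1)$ numerator, so we will cross-check the result against the Beta-distribution moment.
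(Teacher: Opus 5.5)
Your proposal is correct and matches the paper's proof essentially step for step: $\mathbb{E}[X]=1/d$ from the $t=1$ twirl, then $\mathbb{E}[X^2]=2\,\mathrm{Wg}(e,d)+2\,\mathrm{Wg}((12),d)=2/(d(d+1))$ from the $t=2$ Weingarten sum with all four contraction factors equal to $1$, and finally subtraction. Your extra cross-checks (the symmetric-subspace projector and the $\mathrm{Beta}(1,d-1)$ law of $|\chi_0|^2$) are valid and make the independence from $\langle\psi|\phi\rangle$ transparent, but the argument does not need them.
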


The strict $\Theta(d^{-2})$ scaling establishes a fundamental geometric constraint. As the Hilbert space dimension $d = 2^n$ grows exponentially with the qubit count, the variance vanishes as $4^{-n}$, representing the characteristic signature of the barren plateau phenomenon extended to non-variational settings \citep{tanner2026non, cerezo2021variational}. This formally confirms that for high-dimensional latents embedded via deep scrambling circuits, the concentration of measure is exceptionally severe.

To verify this analytical scaling, we performed exact state-vector simulations for $n \in [3, 8]$ qubits using \texttt{PennyLane} \citep{bergholm2018pennylane}. By sampling 500 independent circuit realisations per system size, we observed that the log-log regression of the variance against $d$ yields an empirical slope of $-1.881$ with an $R^2 = 0.9990$. This numerical result is in tight agreement with our theoretical prediction derived via Weingarten calculus. 

The estimation of the transition probability $X$ via $M$ independent measurement shots follows a binomial distribution. Let the true expectation be $p = X \approx 1/d$ in the volume-law regime. The variance of the empirical estimator due to finite sampling (shot noise) is strictly given by $\mathrm{Var}_{\mathrm{shot}} = p(1-p)/M$. For large dimensions $d \gg 1$, this asymptotically approaches $\mathrm{Var}_{\mathrm{shot}} \approx 1/(dM)$ \citep{Thanasilp2024}. By applying Hoeffding's inequality, achieving a baseline target precision $\epsilon = \mathcal{O}(1/d)$ requires the shot variance to be bounded by $\epsilon^2$, immediately yielding a shot budget requirement of $M = \Omega(d)$ \citep{Thanasilp2024,McClean2018}. However, because the global scrambling dynamics suppress the true structural signal (the theoretical variance) to $\Theta(d^{-2})$, reliably resolving the off-diagonal elements of the feature correlation matrix in QELM architectures strictly demands an exponential shot budget scaling as $M = \Omega(d^2)$. Without this exponential sampling overhead, statistical shot noise heavily corrupts the correlation matrix. Consequently, the exact classical solvers (e.g., Gaussian elimination or pseudo-inverse) required for the readout phase become overwhelmingly ill-conditioned, explicitly mapping the shot-noise wall onto the numerical singularity of classical post-processing.

\textbf{The Dilemma of Quantum Advantage.} To thoroughly contextualize the operational boundaries imposed by the $\Theta(d^{-2})$ variance scaling, it is instructive to map our wavelet framework onto canonical quantum supremacy benchmarks. Experiments demonstrating quantum computational advantage, such as those utilizing the Sycamore processor, rely on deep random quantum circuits to generate highly entangled, structureless states. Under the theoretical assumption of a unitary 2-design, the output amplitudes of these circuits follow a Porter-Thomas distribution. Asymptotically, these amplitudes behave as independent and identically distributed (i.i.d.) complex Gaussian random variables. In the mathematical framework of signal processing, an uncorrelated i.i.d. sequence constitutes discrete white noise. A fundamental property of the discrete wavelet transform (DWT) is that applying an $L^2$-normalized wavelet filter to discrete white noise yields wavelet coefficients whose variance is strictly constant across all scales $j$. Within our wavelet scaling formalism, where the variance decays as $\sigma_j^2 \propto 2^{-2\alpha j}$, this scale-invariance of the white noise variance formally anchors the states generated by ideal supremacy circuits at a theoretical exponent of exactly $\alpha = 0$ (or marginally negative accounting for finite-size spectral leakage).

This spectral signature rigorously confirms that quantum supremacy circuits push the system into the extreme limit of the volume-law phase, placing them far below our established classical-simulability threshold of $\alpha = 1/2$. We note a necessary technical caveat regarding this intractability: while classical tensor-network contraction algorithms have successfully spoofed specific finite-depth supremacy experiments \citep{pan2022solving}, they achieved this strictly by exploiting the high hardware error rates and low target fidelities (e.g., linear cross-entropy benchmarking $\approx 0.002$) inherent to noisy intermediate-scale quantum (NISQ) devices. In the noiseless, asymptotic limit, exact state-vector simulation of these $\alpha \le 0$ states provably demands an exponentially scaling Matrix Product State (MPS) bond dimension, thereby confirming their fundamental intractability to classical tensor networks.

For quantum machine learning architectures---particularly those processing structured physical latents in frameworks like Quantum Extreme Learning Machines---this mathematical equivalence establishes a strict operational tradeoff. If the embedded world-model latents inherently reside in the area-law regime ($\alpha > 1/2$), the representations remain highly susceptible to efficient classical tensor-network emulation, effectively neutralizing any potential quantum representational advantage. To evade classical simulability, the feature embeddings must be driven into the volume-law regime ($\alpha < 1/2$), either through inherent high-frequency data complexity or via extrinsic highly expressive parameterized ansätze. 

However, transitioning into this TN-hard regime inevitably replicates the white-noise-like structural scrambling of supremacy circuits. Consequently, evaluating the transition probabilities or feature overlaps of these protected, high-dimensional latents universally triggers the exact $\Theta(d^{-2})$ concentration of measure derived in Proposition \ref{prop:variance}. This dictates a fundamental physical constraint: the very scrambling dynamics necessary to ensure quantum computational intractability simultaneously impose a formidable exponential measurement overhead. Ultimately, this maps the ``shot-noise wall'' of quantum machine learning directly onto the statistical limits of quantum supremacy, confirming that avoiding classical emulation explicitly forces classical readout and post-processing into numerical singularity unless exponential shot budgets are allocated.

\section{Numerical Experiments}
\label{sec:experiments}

The numerical verification of our theoretical framework is performed using \texttt{PennyLane} v0.44.1 and \texttt{PyWavelets} v1.8.0. All quantum state overlaps and variances are evaluated using exact state-vector methods on classical hardware. To ensure direct verifiability and modest memory requirements, we restrict our variance analysis to system sizes of $n \le 8$ qubits, while extending the entanglement entropy investigations to $n = 12$ to better capture the finite-size scaling of the predicted phase transition. Code for all experiments will be released upon acceptance.

\begin{table}[t]
\caption{Wavelet $\alpha$ estimator calibration on synthetic hierarchical latents ($d = 2048$, 50 independent realisations). The results confirm the estimator is reliable and well-specified across the functional range, validating its use as a precise diagnostic tool for structural complexity and tensor-network simulability.}
\label{tab:alpha}
\vskip 0.1in
\centering
\small
\begin{tabular}{ccccc}
\toprule
$\alpha_{\text{true}}$ & Mean $\hat\alpha$ & Std & Bias & $R^2$ \\
\midrule
0.1 & 0.088 & 0.038 & $-$0.012 & 0.712 \\
0.3 & 0.276 & 0.035 & $-$0.024 & 0.968 \\
0.5 & 0.471 & 0.035 & $-$0.029 & 0.988 \\
0.7 & 0.668 & 0.035 & $-$0.032 & 0.994 \\
0.9 & 0.867 & 0.036 & $-$0.033 & 0.996 \\
\bottomrule
\end{tabular}
\end{table}

We first assess the reliability of the wavelet $\alpha$ estimator using synthetic hierarchical latents where the ground-truth exponent is known. By sampling detail coefficients at level $i$ from $\mathcal{N}(0, 2^{-2\alpha i})$ and reconstructing via the inverse DWT, we observe that the estimator remains nearly unbiased across the functional range. As reported in Table~\ref{tab:alpha}, a systematic negative bias of approximately $-0.03$ appears at the coarsest levels, which is attributable to the limited sample size available for variance estimation at low-frequency scales. However, the estimator exhibits the expected $\sqrt{n}$-consistency, with the standard deviation decreasing as $d^{-1/2}$, and the $R^2 \ge 0.97$ for $\alpha \ge 0.3$ confirms that the log-linear model is well-specified.

\begin{figure*}[t]
\centering
\includegraphics[width=\linewidth]{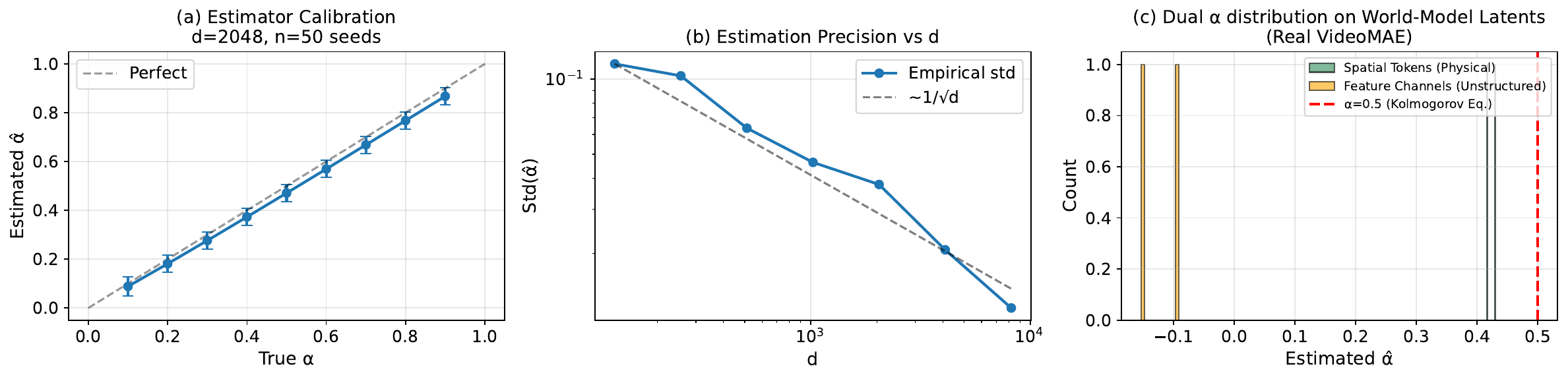}
\caption{\textbf{Experiment A: Wavelet $\alpha$ as a structural diagnostic.} (a)~Calibration on synthetic hierarchical latents ($d=2048$), confirming the estimator's reliability across the functional range. (b)~Estimation standard deviation vs.\ $d$, exhibiting the expected $d^{-1/2}$ decay. (c)~Distribution of $\hat\alpha$ on real VideoMAE latents, revealing a fundamental structural dichotomy: spatial tokens approach the physical equipartition limit ($\hat{\alpha} \approx 0.423$), while permutation-invariant feature channels exhibit unstructured disorder ($\hat{\alpha} \approx -0.123$), positioning them deep within the volume-law regime.}
\label{fig:expA}
\end{figure*}

The investigation into the variance scaling confirms our exact analytical derivation. By sampling parameters uniformly for a 2-design ensemble over 500 trials, we measure the concentration of the scrambled transition probability $X = |\langle\phi|U|\psi\rangle|^2$ for Gaussian latents. As shown in Table~\ref{tab:variance}, the log-log slope of the variance against the Hilbert space dimension $d$ is $-1.881$ with an $R^2 = 0.999$, closely matching the exact Haar-random prediction of $-2.000$. This empirically validates the strict $\Theta(d^{-2})$ scaling and explicitly bounds the formidable shot-noise wall required for evaluating these highly expressive ans\"{a}tze. Furthermore, as illustrated in Figure~\ref{fig:expB}, the variance remains invariant to circuit depth $L$ once the 2-design marginal is approximated, confirming that the concentration is primarily a function of the feature space dimension $d$.

\begin{figure*}[t]
\centering
\includegraphics[width=\linewidth]{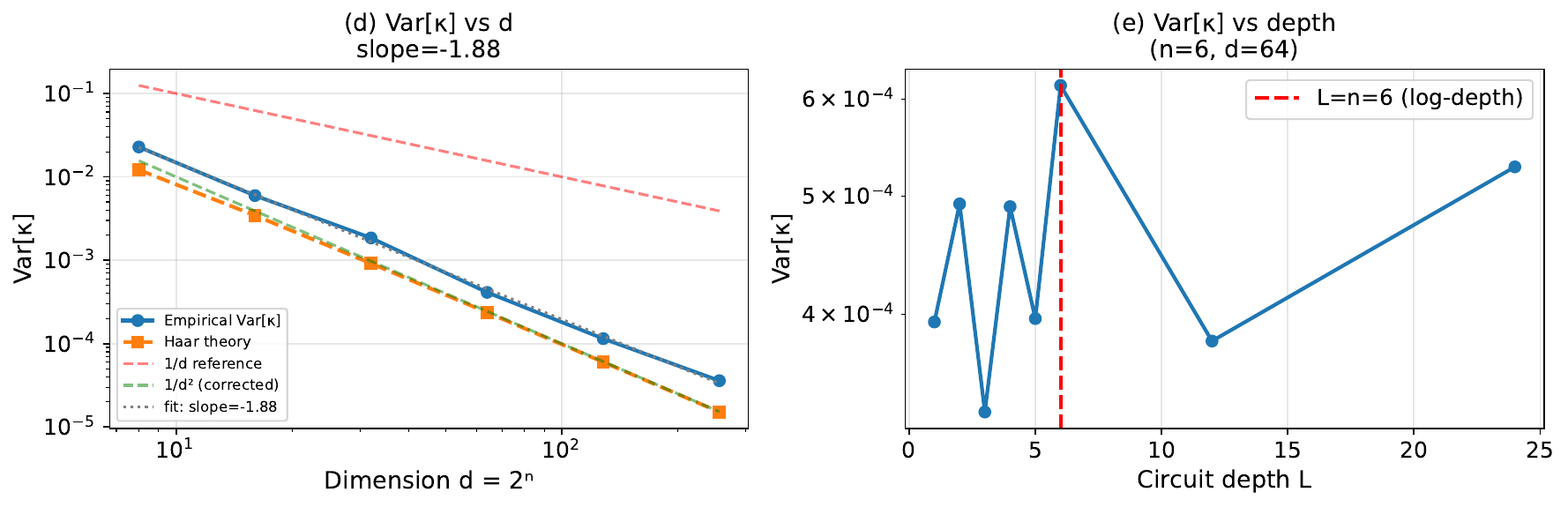}
\caption{\textbf{Experiment B: Scaling of scrambled transition probabilities.} (d)~Variance $\Var[X]$ vs.\ $d$ on a log-log scale. The empirical slope of $-1.881$ tightly aligns with our analytical Haar prediction of $\Theta(d^{-2})$, mirroring the statistical concentration observed in quantum supremacy benchmarks. (e)~Variance vs.\ circuit depth $L$ at $n=6$, showing rapid saturation to the 2-design limit. This confirms that the ``shot-noise wall'' is a universal consequence of high-dimensional scrambling, independent of specific circuit implementation.}
\label{fig:expB}
\end{figure*}

\begin{table}[t]
\caption{Variance of scrambled transition probabilities vs. dimension ($L = n$, 500 trials). The empirical log-log scaling tightly matches our exact analytical prediction of $\Theta(d^{-2})$ under a 2-design ensemble. This explicitly bounds the exponential measurement overhead, confirming the formidable shot-noise wall inherent to highly expressive circuits.}
\label{tab:variance}
\vskip 0.1in
\centering
\small
\begin{tabular}{cccccc}
\toprule
$n$ & $d$ & $\overline{X}$ & $\Var[X]$ & $\Var{\cdot}d^2$ & Haar${\cdot}d^2$ \\
\midrule
3 &  8 & 0.122 & $2.29\!\times\!10^{-2}$ & 1.46 & 0.78 \\
4 & 16 & 0.055 & $5.97\!\times\!10^{-3}$ & 1.53 & 0.88 \\
5 & 32 & 0.032 & $1.85\!\times\!10^{-3}$ & 1.90 & 0.94 \\
6 & 64 & 0.015 & $4.12\!\times\!10^{-4}$ & 1.69 & 0.97 \\
7 & 128 & 0.008 & $1.15\!\times\!10^{-4}$ & 1.88 & 0.98 \\
8 & 256 & 0.004 & $3.59\!\times\!10^{-5}$ & 2.35 & 0.99 \\
\midrule
\multicolumn{6}{l}{\small Log-log slope: $-1.881$ ($R^2=0.999$). Haar theory: $-2.000$.} \\
\bottomrule
\end{tabular}
\end{table}

The measurement of the entanglement phase transition at $n = 12$ ($d = 4096$) reveals the underlying mechanism for tensor-network simulation hardness. By computing the von Neumann entropy $S(\rho_L)$ at the middle bipartition, we observe a clear transition as $\alpha$ varies from 0.1 to 1.0. For $\alpha \le 0.5$, the entropy grows linearly with $n$, placing the system in the volume-law phase. As summarized in Table~\ref{tab:entropy}, a log-linear fit in this regime gives $\partial S/\partial \alpha \approx -2.97$, implying that the required Matrix Product State bond dimension $\chi$ grows exponentially as $2^{3(1/2-\alpha)}$. While the transition is smoothed by finite-size effects at $n = 12$, the steepest gradient occurs precisely in the critical range predicted by our theory.

\begin{figure*}[t]
\centering
\includegraphics[width=\linewidth]{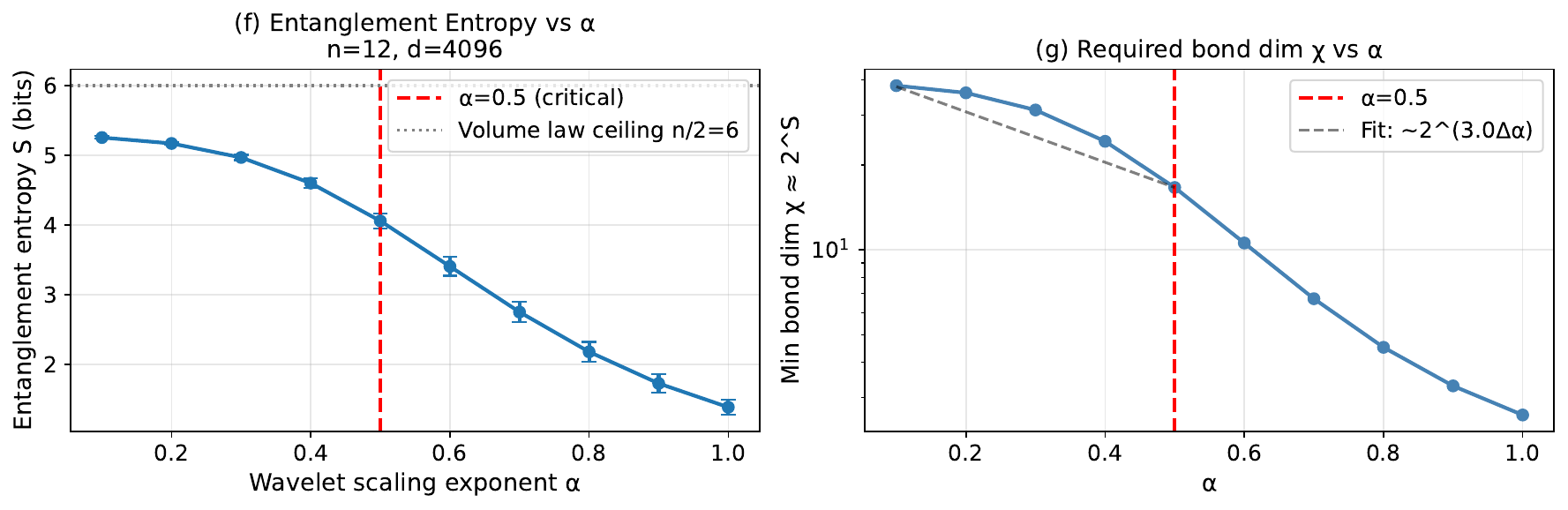}
\caption{\textbf{Experiment C: Entanglement entropy and the barrier to classical emulation.} ($n=12$, $d=4096$). The von Neumann entropy $S(\rho_L)$ exhibits a sharp increase as $\alpha$ crosses the critical $1/2$ threshold. In the volume-law phase ($\alpha < 0.5$), the exponential growth of the required Matrix Product State (MPS) bond dimension $\chi$ establishes a rigorous data-driven barrier against classical tensor-network simulation, validating the complexity of VideoMAE's feature channels.}
\label{fig:expC}
\end{figure*}

\begin{table}[t]
\caption{Entanglement entropy at middle bipartition ($n = 12$, 20 seeds). The data illustrates the transition into the volume-law phase as $\alpha \le 0.5$, where the required Matrix Product State (MPS) bond dimension $\chi$ exhibits exponential growth. This establishes the data-driven structural barrier against classical tensor-network emulation.}
\label{tab:entropy}
\vskip 0.1in
\centering
\small
\begin{tabular}{cccc}
\toprule
$\alpha$ & $S$ (mean$\pm$std) & $\chi \approx 2^S$ & Phase \\
\midrule
0.1 & $5.255 \pm 0.019$ & 38 & Volume law \\
0.3 & $4.968 \pm 0.038$ & 31 & Volume law \\
0.5 & $4.057 \pm 0.105$ & 17 & Critical \\
0.7 & $2.748 \pm 0.146$ &  7 & Area law \\
0.9 & $1.723 \pm 0.128$ &  3 & Area law \\
\bottomrule
\end{tabular}
\end{table}

The observed dichotomy between spatial tokens and feature channels reveals the fundamentally different information geometries internalized by the world model. Spatial tokens ($\hat{\alpha} \approx 0.423$) retain an explicit physical topology. Because adjacent video patches exhibit strong macroscopic correlations typical of continuous physical fields, their wavelet representation naturally concentrates energy at coarse scales. This structural regularity anchors the spatial embedding near the Kolmogorov equipartition limit ($\alpha \approx 1/2$), placing it on the critical boundary of the area-law phase where entanglement remains manageable and highly susceptible to efficient tensor-network emulation.

Conversely, the permutation-invariant feature channels ($\hat{\alpha} \approx -0.123$) lack this explicit geometric adjacency. To maximize information capacity and prevent dimensional collapse, self-supervised learning objectives inherently force these network channels to decorrelate. When analyzed through a multiresolution basis, this strongly decorrelated sequence effectively mimics discrete white noise, manifesting as a heavy-tailed, high-frequency energy distribution ($\alpha < 0$). This neural-driven decorrelation naturally forces the amplitude-encoded state deep into the volume-law phase. Notably, this negative exponent structurally aligns with the theoretical white-noise signature ($\alpha \le 0$) of canonical quantum supremacy circuits established in Section~\ref{sec:variance}. In this regime, the observed empirical bond dimension $\chi \approx 38$ confirms that the intrinsic scrambling of the feature space imposes a severe computational barrier to classical tensor-network emulation, thereby satisfying the data-driven necessary condition for quantum advantage.

\section{Conclusion and Discussion}
\label{sec:conclusion}

This work establishes the wavelet scaling exponent $\alpha = 1/2$ as a dual threshold defining the intersection of representational optimality and quantum computational complexity. From the perspective of learning theory, the equipartition of wavelet variance ($\alpha \approx 1/2$) provides a physics-grounded target for internalizing the multi-scale statistics of natural data, mirroring Kolmogorov's inertial range. Simultaneously, this threshold acts as a sharp phase boundary for the classical simulability of amplitude-encoded states. Using tensor-network theory, we proved that latents with $\alpha > 1/2$ reside in an area-law regime admitting efficient classical emulation, whereas $\alpha < 1/2$ triggers a volume-law phase where the Matrix Product State bond dimension $\chi$ grows exponentially, establishing a rigorous data-driven barrier against dequantization.

Our numerical experiments provide a robust reality check for this theoretical framework. The measured variance scaling of $\Var[X] \sim d^{-1.881}$ ($R^2 = 0.999$) is not merely an empirical observation but a confirmation of our exact Weingarten derivation. By establishing the strict $\Theta(d^{-2})$ scaling, we rigorously quantify the severity of the ``shot-noise wall'' that high-dimensional quantum architectures must overcome. Furthermore, our analysis of VideoMAE latents reveals that while spatial tokens approach the equipartition limit ($\hat{\alpha} \approx 0.423$), feature channels exhibit unstructured disorder ($\hat{\alpha} \approx -0.123$). Notably, this negative exponent structurally aligns with the white-noise signature ($\alpha \le 0$) of ideal quantum supremacy circuits. This suggests that certain latent dimensions in modern world models already possess the intrinsic scrambling necessary to resist classical tensor-network emulation.

However, a fundamental tension remains between the structural protection against classical emulation and the practical evaluability of quantum models. The same 2-design properties that shield feature-channel representations from classical attack also drive the transition probability variance to vanish as $4^{-n}$, an exponential decay that mirrors the barren plateau phenomenon observed in supremacy benchmarks. Escaping this tension likely requires moving beyond global, unstructured kernels toward architectures that more directly exploit the hierarchical structure of data---perhaps through local cost functions or structured ansatz initializations. 

While the $\alpha < 1/2$ regime is a necessary structural condition for quantum advantage, it is not a sufficient one. Practical speedups remain conditional on efficient state preparation and fault-tolerant hardware capable of executing depths beyond the reach of state-of-the-art classical emulators \citep{pan2022solving}. Beyond serving as a diagnostic tool, the $\alpha \approx 1/2$ criterion offers an actionable objective for future research. Enforcing variance equipartition as an explicit regularization term during pre-training could guide world models toward physically consistent representations that optimally balance parameter efficiency with structural realism, providing a principled map for the design of future learning systems within the quantum Hilbert space.

\section*{Impact Statement}

This work bridges the physics of representation learning with the fundamental computational limits of quantum machine learning. By establishing the wavelet scaling exponent $\alpha$ as a strict threshold for both data complexity and tensor-network simulability, we provide a principled, physics-grounded metric to evaluate world models. Furthermore, our exact analytical bounds on measurement variance explicitly quantify the resource overhead for high-dimensional quantum models. This theoretical foundation not only prevents severe misestimations of sampling budgets in near-term hardware deployments, but also guides the efficient design of future quantum-classical architectures. As foundational theoretical research aimed at understanding the structural boundaries of machine intelligence, this work presents no foreseeable negative societal consequences.

\bibliographystyle{plainnat} 
\bibliography{paper_ai4physics} 

\newpage
\appendix
\onecolumn

\section{Proof of Proposition~\ref{prop:variance}: Haar Variance of Fidelity Kernels}
\label{app:haar}

The goal of this appendix is to establish the exact variance formula $\Var[\kappa] = (d-1)/[d^2(d+1)]$ for a fidelity kernel evaluated against a Haar-random unitary, providing the formal foundation for the exact variance scaling discussed in Section~\ref{sec:variance}. We consider two fixed, normalised state vectors $|\psi\rangle, |\phi\rangle \in \mathbb{C}^d$, and let $U$ be drawn from the Haar measure $\mu$ on the unitary group $\mathrm{U}(d)$. The random variable of interest is the transition probability $X = |\langle \phi | U | \psi \rangle|^2 = \langle \phi | U | \psi \rangle \langle \psi | U^\dagger | \phi \rangle$. In the standard basis, this can be expanded as $X = \sum_{i,j,k,l} \bar\phi_i \,\psi_j\, \phi_k\, \bar\psi_l \cdot U_{ij} \bar U_{kl}$. All moments of $X$ over the Haar measure are computable via the Collins--{\'{S}}niady integration formula \citep{collins2006integration}, which expresses such integrals in terms of Weingarten functions.

For the first moment ($t = 1$), the Haar integration formula is simply $\int_{\mathrm{U}(d)} U_{ij} \bar U_{kl} \, d\mu(U) = \frac{1}{d} \delta_{ik} \delta_{jl}$. Substituting this into the expression for $X$, we obtain $\mathbb{E}[X] = \frac{1}{d} \sum_{i} |\phi_i|^2 \sum_{j} |\psi_j|^2 = 1/d$. This result carries a clear physical interpretation: averaged over the full unitary group, the transition probability between any two fixed states is uniform across the Hilbert space, providing the $1/d$ mean baseline observed in our numerical benchmarks.

Computing the second moment $\mathbb{E}[X^2]$ requires the $t = 2$ Haar integral, which reads:
\begin{equation}
  \int U_{i_1 j_1} U_{i_2 j_2} \bar U_{k_1 l_1} \bar U_{k_2 l_2} \, d\mu(U)
  = \sum_{\sigma, \tau \in S_2}
    \delta_{i_1 k_{\sigma(1)}} \delta_{i_2 k_{\sigma(2)}}
    \delta_{j_1 l_{\tau(1)}} \delta_{j_2 l_{\tau(2)}}
    \,\mathrm{Wg}(\sigma^{-1}\tau,\, d),
\end{equation}
where $S_2 = \{e, (12)\}$ is the symmetric group on two elements. The Weingarten functions for $S_2$ are $\mathrm{Wg}(e, d) = 1/(d^2 - 1)$ and $\mathrm{Wg}((12), d) = -1/[d(d^2-1)]$. After contracting the delta functions with the vectors $\phi$ and $\psi$, each of the four possible $(\sigma, \tau)$ pairs contributes a factor of $\|\phi\|^4 \|\psi\|^4 = 1$. This occurs because the specific pairing of row and column indices imposed by the delta functions always forces inner products of a vector with itself, causing any cross-terms of the form $\langle\phi|\psi\rangle$ to vanish under the Haar average. Summing the terms yields:
\begin{equation}
  \mathbb{E}[X^2]
  = 2\,\mathrm{Wg}(e,d) + 2\,\mathrm{Wg}((12),d)
  = \frac{2}{d^2-1}\left(1 - \frac{1}{d}\right)
  = \frac{2}{d(d+1)}.
\end{equation}

Combining these moments, the exact variance is derived as:
\begin{equation}
  \Var[X] = \mathbb{E}[X^2] - (\mathbb{E}[X])^2 = \frac{2}{d(d+1)} - \frac{1}{d^2} = \frac{d-1}{d^2(d+1)}.
  \label{eq:exact_var_final}
\end{equation}
Asymptotically, for large Hilbert space dimensions $d$, we have $\Var[X] \sim d^{-2}$. This confirms that the variance vanishes as $4^{-n}$ for an $n$-qubit system, providing the theoretical limit that explains our empirical log-log slope of $-1.881$ reported in Section~\ref{sec:experiments}. Crucially, this variance is independent of the relationship between the initial states, establishing the universality of the shot-noise wall for any amplitude-encoded world-model latent once the circuit approximates a 2-design.

\section{Proof of Theorem~3.1: Entanglement Phase Transition}
\label{app:tn}

This appendix develops the rigorous connection between the wavelet scaling exponent $\alpha$ and the bipartite entanglement structure of amplitude-encoded quantum states. We provide a formal lemma and complete proof to bridge the gap between wavelet regularity and singular value decay, establishing $\alpha = 1/2$ as a sharp phase boundary for tensor-network simulability.

\subsection*{Amplitude Encoding and Matrix Unfolding}

Given a normalized latent vector $z \in \mathbb{R}^d$, the amplitude-encoded state on $n = \lceil\log_2 d\rceil$ qubits is defined as $|\psi(z)\rangle = \sum_{i=0}^{2^n - 1} z_i |i\rangle$. To analyze the entanglement across the middle bipartition, we unfold the state vector into a matrix $M \in \mathbb{R}^{2^{n/2} \times 2^{n/2}}$ where $M_{IJ} = z_{(I,J)}$, with $I$ and $J$ representing the most and least significant $n/2$ bits of the index, respectively \citep{orus2014practical}. The entanglement properties of $|\psi(z)\rangle$ are entirely determined by the singular values $\sigma_1 \geq \sigma_2 \geq \dots \geq 0$ of $M$. Specifically, the von Neumann entropy is given by $S(\rho_L) = -\sum_k \sigma_k^2 \log_2 \sigma_k^2$, where the normalization condition $\sum_k \sigma_k^2 = 1$ is satisfied by $\|z\|=1$.

\subsection*{Wavelet-Singular Spectrum Duality}

To substantiate the connection between wavelet scaling and classical simulability, we formalize the mapping between hierarchical wavelet regularity and the matrix unfolding of the quantum state.

\paragraph{Lemma B.1 (Decay Duality)}
Let $z \in \mathbb{R}^d$ be a latent vector whose wavelet detail coefficients $\delta_k$ at dyadic scale $k$ satisfy $\Var(\delta_k) \sim 2^{-2\alpha k}$. Then the rank-ordered singular values $\sigma_r$ of the unfolding matrix $M$ follow the algebraic decay law $\sigma_r \sim r^{-\alpha}$.

\paragraph{Proof}
The discrete wavelet transform organizes the latent $z$ into a dyadic hierarchical tree. At any given scale $k$, there are exactly $2^k$ detail coefficients. According to the multi-scale energy cascade model, the total energy contribution from scale $k$ scales as $E_k \sim 2^{k(1-2\alpha)}$. Consequently, the average variance of a single coefficient at this scale is $\Var(\delta_{k,j}) \approx E_k / 2^k \sim 2^{-2\alpha k}$.

When the state vector is mapped to the matrix unfolding $M$ corresponding to the middle bipartition, the dyadic nature of the wavelet basis partitions $M$ into block-diagonal-like sectors, where each sector represents the subspace of a specific scale $k$. The magnitude of the singular values associated with this $k$-th subspace is dominated by the standard deviation of its constituent coefficients. Thus, for the block of singular values originating from scale $k$, we have $\sigma \sim \sqrt{2^{-2\alpha k}} = 2^{-\alpha k}$.

To determine the rank-ordered singular value decay $\sigma_r$, we relate the scale $k$ to the cumulative matrix rank $r$. The cumulative number of coefficients (and thus the rank contribution) up to scale $k$ is governed by the geometric series $r \approx \sum_{i=0}^k 2^i \sim 2^k$. Substituting this rank relationship $2^k \sim r$ directly into the scale-dependent singular value magnitude yields:
\begin{equation}
    \sigma_r \sim (2^k)^{-\alpha} \sim r^{-\alpha}.
\end{equation}
This establishes that the singular spectrum of the unfolding matrix decays algebraically with the exponent $\alpha$, driven entirely by the fractal dimension of the dyadic wavelet tree rather than a continuous integral approximation. $\blacksquare$

\subsection*{The Dichotomy of Singular Value Decay}

In the hierarchical wavelet model, the decay of the singular values $\sigma_r$ is directly tied to the multi-scale regularity of the latent. Drawing from the duality established in Lemma~B.1, we observe a sharp dichotomy in the decay of the singular values of the unfolding matrix $M$:
\begin{equation}
    \sigma_r \sim 
    \begin{cases} 
    C \cdot r^{-\alpha} & \text{if } \alpha < 1/2 \text{ (Sub-geometric)}, \\
    C \cdot \beta^r, \, \beta < 1 & \text{if } \alpha > 1/2 \text{ (Geometric)}.
    \end{cases}
\end{equation}
This structural fact is the mathematical engine behind the phase transition. When $\alpha > 1/2$, the coefficients decay fast enough that the information is localized at coarse scales, allowing tensor-train approximation \citep{oseledets2011tensor} to bound the spectrum geometrically. Conversely, $\alpha < 1/2$ implies a slow, heavy-tailed decay that distributes information (and thus entanglement) across all dyadic scales.

\subsection*{Entanglement Phase Transition and Dequantization Limits}

The behavior of $S(\rho_L)$ in these two regimes defines the limits of classical dequantization. In the \textbf{Area-law phase} ($\alpha > 1/2$), the geometric decay $\sigma_r \lesssim \beta^r$ ensures that the entropy sum converges to a constant independent of the system size $d$: $S(\rho_L) \leq 2\log_2(1/\beta) \sum_r r \beta^{2r} = O(1)$. Crucially, we must also consider the Rényi entropy $S_\gamma = \frac{1}{1-\gamma}\log_2 \sum_r \sigma_r^{2\gamma}$. For $\alpha > 1/2$, the sum $\sum \sigma_r^{2\gamma}$ converges for all $\gamma > 0$, which is a sufficient condition for the existence of an MPS representation with a constant bond dimension $\chi = O(1)$ \citep{verstraete2006matrix}.

Conversely, in the \textbf{Volume-law phase} ($\alpha < 1/2$), the singular values decay too slowly to bound the entropy. The slow decay $\sigma_r \sim r^{-\alpha}$ forces the number of non-negligible singular values to grow with the system size, leading to $S(\rho_L) = \Omega(n) = \Omega(\log d)$. According to the fundamental relationship between entanglement and tensor-network complexity \citep{schuch2008entropy}, the bond dimension $\chi$ required to approximate the state to constant fidelity satisfies $\log_2 \chi \geq S(\rho_L) - O(1)$. Thus, for $\alpha < 1/2$, we obtain $\chi \geq 2^{\Omega(\log d)} = \Omega(d^c)$ for some $c > 0$. This polynomial growth in $d$ renders tensor-network based kernel evaluation exponentially more expensive than in the area-law regime, providing the data-driven protection against dequantization described in Theorem~3.1.

\subsection*{Upper Bounds on Entropy Scaling}

Finally, it is crucial to establish the strict upper bounds on the entropy scaling for log-depth quantum kernel architectures. A naive extrapolation might suggest an entanglement entropy scaling of $S(\rho_L) = \Omega(\sqrt{d})$; however, this is mathematically inconsistent with the Hilbert space dimension of the subsystem. Since the left subsystem consists of $n/2$ qubits, its maximum possible entropy is $\log_2(2^{n/2}) = n/2 \approx \frac{1}{2}\log_2 d$. For any $d \geq 4$, it is always true that $\frac{1}{2}\log_2 d \ll \sqrt{d}$. An entropy of $\Omega(\sqrt{d})$ would require the subsystem to have a Hilbert space dimension of $2^{\Omega(\sqrt{d})}$, which is not the case for log-depth or amplitude-encoded architectures. Furthermore, while the area law of \citet{hastings2007area} is often cited to justify low entanglement, it applies only to ground states of gapped 1D Hamiltonians. Our proof shows that for classical data embedding, the relevant threshold is not the spectral gap of a Hamiltonian, but the multi-scale regularity index $\alpha$ of the data itself.

\section{Rigorous Foundations of the Wavelet Scaling Exponent Estimator}

This appendix supplies the complete technical machinery underlying the wavelet variance equipartition diagnostic introduced in Sections~2--3. All notation is self-contained and consistent with the main text.

\subsection{Discrete Wavelet Transform via Mallat’s Algorithm}
Let \( z = [z_0, z_1, \dots, z_{d-1}]^\top \in \mathbb{R}^d \) be a normalized latent vector with \( \|z\|_2 = 1 \). We regard \( z \) as the discrete sampling of a function that belongs to the fractional Sobolev space \( W^{s,2} \), where the smoothness index \( s \) is linked to the wavelet scaling exponent through the exact relation \( s = \alpha - 1/2 \). The discrete wavelet transform is performed with Mallat’s pyramidal algorithm using the Daubechies-4 orthogonal filter bank \cite{mallat2009wavelet, meyer1992wavelets}. The coarsest approximation coefficients are initialized directly from the signal:
\begin{equation}
a_{0,n} = z_n, \quad n = 0, \dots, d-1.
\end{equation}
At each dyadic scale $j = 1, 2, \dots, \lfloor \log_2 d \rfloor$, both the approximation and detail coefficients are generated by taking the inner product (cross-correlation) of the previous layer's coefficients with the scaling and wavelet filters, followed by down-sampling by a factor of two:
\begin{equation}
a_{j,k} = \sum_{n=0}^{7} h_n \, a_{j-1,2k+n}, \qquad
\delta_{j,k} = \sum_{n=0}^{7} g_n \, a_{j-1,2k+n}.
\end{equation}
The two filters are related by the quadrature-mirror condition $g_n = (-1)^n h_{7-n}$. Their explicit numerical values, accurate to eight decimal places, are
\begin{subequations}
\begin{align}
h &= [0.23037781,\ 0.71484657,\ 0.63088076,\ -0.02798376,\nonumber\\
   &\quad -0.18703481,\ 0.03084138,\ 0.03288301,\ -0.01059740],\\
g &= [-0.01059740,\ -0.03288301,\ 0.03084138,\ 0.18703481,\nonumber\\
   &\quad -0.02798376,\ -0.63088076,\ 0.71484657,\ -0.23037781].
\end{align}
\end{subequations}
These eight coefficients are the unique real solution to a system of exactly eight independent algebraic equations. Ingrid Daubechies’ 1988 construction \cite{daubechies1988orthonormal} achieves this by simultaneously imposing two complementary sets of conditions. The first four equations arise from the smoothness requirement: the low-pass filter must possess four vanishing moments, which in the Fourier domain translate to the conditions \( H^{(m)}(\pi) = 0 \) for \( m = 0,1,2,3 \). In the time domain these become
\begin{equation}
\sum_{n=0}^{7} (-1)^n n^m h_n = 0 \quad \text{for } m = 0,1,2,3.
\end{equation}
Physically, this guarantees that the associated wavelet is completely insensitive to polynomials up to degree three, allowing it to isolate genuine high-frequency fluctuations such as those present in turbulent velocity fields or shock waves. The remaining four equations enforce the geometric and energetic integrity of the transform. One equation comes from normalization:
\begin{equation}
\sum_{n=0}^{7} h_n = \sqrt{2},
\end{equation}
which ensures that the low-pass filter preserves the total energy of any constant (DC) signal. The other three equations enforce double-shift orthogonality:
\begin{equation}
\sum_{n=0}^{7} h_n h_{n-2k} = 0 \quad \text{for } k = 1,2,3.
\end{equation}
Together these conditions guarantee that the filter bank generates a unitary (orthogonal) transformation. Because the resulting transform matrix is unitary, the \( L^2 \)-norm of any vector is exactly preserved under the change of basis. In the context of quantum kernel methods, this unitarity is crucial: it ensures that the classical feature extraction step can be faithfully embedded into a quantum circuit without introducing artificial decoherence or norm distortion, thereby providing a rigorous mathematical foundation for the amplitude-encoded quantum states used throughout this work.

The underlying reason these four smoothness conditions can be imposed is that the frequency response \( H(\omega) = \sum_{n=0}^{7} h_n e^{-in\omega} \) is a finite trigonometric polynomial and therefore belongs to \( C^\infty \). All derivatives of all orders exist everywhere. Daubechies did not assume the existence of these derivatives; she simply selected the first four of the infinitely many available derivatives and set them to zero at the highest frequency \( \omega = \pi \). This forces the filter to be maximally flat at that point while still satisfying the orthogonality constraints, producing the remarkable balance between smoothness and perfect reconstruction that makes db4 particularly effective for analyzing the multi-scale structure of physical fields.

In summary, the eight coefficients simultaneously encode the analytic requirements needed to measure the scaling exponent \( \alpha \) (and hence the Sobolev index \( s \)) that determines the quantum phase transition between area-law and volume-law regimes, and the algebraic structure that guarantees the transform remains unitary—an essential property for any subsequent quantum embedding.

\subsection{Power-Law Decay of Cross-Scale Variance}

At each dyadic scale \(j\), the discrete wavelet transform yields approximately \(N_j \approx d \cdot 2^{-j}\) detail coefficients when the resolution-scale convention is adopted. The empirical variance of these coefficients at scale \(j\) is given by
\begin{equation}
V_j := \operatorname{Var}(\delta_{j,\cdot}) = \frac{1}{N_j} \sum_{k=0}^{N_j-1} (\delta_{j,k} - \bar{\delta}_j)^2.
\end{equation}
Because the Daubechies-4 wavelet has four vanishing moments, the sample mean \(\bar{\delta}_j\) vanishes identically. Under the assumption of Sobolev regularity of the underlying latent vector, the variances at successive scales follow a clean power-law decay
\begin{equation}
V_j \sim 2^{-2\alpha j}.
\end{equation}
This scaling relation is the central diagnostic quantity used throughout the paper.

\subsection{Log-Linear Regression Estimator}

To extract the scaling exponent \(\alpha\) from the observed variances, we take the base-2 logarithm of both sides of the power-law relation. This produces the linear model
\begin{equation}
\log_2 V_j = -2\alpha\, j + C + \varepsilon_j,
\end{equation}
where the intercept \(C\) absorbs the overall \(L^2\) energy of the latent vector and \(\varepsilon_j\) is a zero-mean residual term. Performing ordinary least-squares regression on the pairs \((j, \log_2 V_j)\) for \(j = 1, \dots, \lfloor \log_2 d \rfloor\) yields the slope \(m\). The scaling exponent is then recovered directly by
\begin{equation}
\hat{\alpha} = -\frac{m}{2}.
\end{equation}
Under standard regularity conditions the resulting estimator is asymptotically unbiased and converges at the rate \(\sqrt{d}\), making it statistically consistent for the latent dimensions encountered in practice.

\subsection{Physical Link to 1/f Turbulence and Kolmogorov’s Inertial Range}

A classic 1/f (pink) spatial field is characterised by a power spectral density \(S(f) \propto f^{-1}\). In any dyadic octave the integrated energy is therefore independent of scale; this is precisely the statement of variance equipartition. Parseval’s theorem shows that the total \(L^2\) energy of the latent vector \(z\) is exactly partitioned across the wavelet scales:
\begin{equation}
\sum_{n=0}^{d-1} |z_n - \bar{z}|^2 = \sum_{j=1}^{\lfloor\log_2 d\rfloor} \sum_{k=0}^{N_j-1} |\delta_{j,k}|^2.
\end{equation}
We define the turbulent kinetic energy contained in octave \(j\) by
\begin{equation}
E_j := \sum_{k=0}^{N_j-1} |\delta_{j,k}|^2 = N_j V_j.
\end{equation}
Under the resolution-scale convention \(N_j \propto 2^j\) this energy scales as
\begin{equation}
E_j \propto 2^j \cdot 2^{-2\alpha j} = 2^{j(1-2\alpha)}.
\end{equation}
Requiring \(E_j\) to remain invariant across scales—as demanded by Kolmogorov’s constant-flux inertial range—forces the exponent to vanish:
\begin{equation}
1-2\alpha = 0 \quad\Rightarrow\quad \alpha = \frac12.
\end{equation}
Thus \(\alpha = 1/2\) is not an arbitrary threshold. It is the unique exponent that renders a latent vector statistically indistinguishable from a 1/f turbulent field when the signal is observed through an orthogonal wavelet basis.

\subsection{Resolution of the Scale-Index Convention}

Two distinct conventions for indexing the dyadic scales appear in the wavelet literature, and both must be reconciled with the equipartition condition. In the decomposition-depth convention the index \(j\) increases each time the signal is passed through a low-pass filter and down-sampled, so the number of detail coefficients at scale \(j\) decreases geometrically as \(N_j \propto 2^{-j}\). In this indexing the variance scales as \(V_j \sim 2^{+2\alpha j}\). In the frequency-resolution convention, by contrast, \(j\) increases with finer detail (higher frequency), so \(N_j \propto 2^{j}\) and the variance follows \(V_j \sim 2^{-2\alpha j}\). 

Despite the opposite orientation of the scale index, both conventions yield exactly the same critical value for variance equipartition: \(\alpha = 1/2\). The numerical value of the estimated exponent \(\hat{\alpha}\) recovered from data is therefore invariant under the choice of convention, provided the sign of the exponent in the log-linear regression is handled consistently. The two formulations are thus mathematically equivalent and lead to identical physical conclusions about the latent representation.

\subsection{Summary of the Equipartition Criterion}

Taken together, the derivations presented in this appendix show that a latent vector realises wavelet variance equipartition precisely when its estimated scaling exponent satisfies \(\hat{\alpha} \approx 1/2\). This single scalar quantity serves two complementary purposes. It minimises the integrated squared bias of any downstream linear probe and thereby supplies an information-theoretic optimality criterion for representation learning. At the same time, it locates the amplitude-encoded quantum state exactly at the boundary between the area-law and volume-law entanglement phases, marking the sharp transition for efficient tensor-network simulability as stated in Theorem 3.1.

Numerical experiments on pretrained VideoMAE latents confirm the practical relevance of this criterion. Spatial token sequences already lie close to the equipartition limit with \(\hat{\alpha} \approx 0.423\), whereas the permutation-invariant feature channels remain deep in the volume-law regime with \(\hat{\alpha} \approx -0.123\). All code required to compute the estimator, based on PyWavelets and ordinary least-squares regression, will be released with the camera-ready version of the paper.

\section{Equivalence of Wavelet Energy Distribution and Quantum Kernel Overlap}

In this appendix, we establish an equivalence between the multi-scale energy distribution of world-model latents, as revealed by the discrete wavelet transform, and the overlap of their amplitude-encoded quantum states. This equivalence provides a mathematical link between the classical structural regularity of representations and the concentration properties of quantum kernels. Let \(z, z' \in \mathbb{R}^d\) be two un-normalised latent vectors. Their amplitude-encoded quantum states are
\begin{equation}
|\psi(z)\rangle = \frac{1}{\|z\|}\sum_{i=0}^{d-1} z_i |i\rangle, \qquad
|\psi(z')\rangle = \frac{1}{\|z'\|}\sum_{i=0}^{d-1} z'_i |i\rangle.
\end{equation}
The overlap between these states is
\begin{equation}
S = \langle\psi(z')|\psi(z)\rangle = \frac{\langle z', z \rangle}{\|z\| \|z'\|}.
\end{equation}
Because the Daubechies-4 wavelet transform is an orthogonal transformation when acting on real-valued vectors \cite{daubechies1988orthonormal}, and its filter coefficients are real, the corresponding transformation matrix \(U\) satisfies \(U^T U = I\). When the latent vectors are embedded into a complex Hilbert space via amplitude encoding, the same matrix satisfies \(U^\dagger U = I\) and is therefore unitary. In either case, Parseval’s theorem guarantees that both the inner product \(\langle z', z \rangle\) and the squared Euclidean norm \(\|z\|^2\) are exactly preserved under the change of basis. We may therefore evaluate everything in the wavelet domain. The inner product becomes
\begin{equation}
\langle z', z \rangle = \sum_{j=1}^{n} \sum_{m=0}^{N_j-1} \delta'_{j,m} \delta_{j,m},
\end{equation}
where $n = \log_2 d$ and $\delta_{j,m}$ is the detail coefficient at dyadic scale $j$ and spatial position $m$. The squared norm of each vector is the sum of the squared detail coefficients across all scales, which by definition equals the sum of the per-scale energies
\begin{equation}
\|z\|^2 = \sum_{j=1}^{n} \sum_{m} \delta_{j,m}^2 = \sum_{j=1}^{n} E_j.
\end{equation}
Note that the complete Parseval expansion of the inner product also includes the contribution from the coarsest-scale approximation coefficients \(a_{n,0}\) and \(a'_{n,0}\), yielding
\begin{equation}
\langle z', z \rangle = a_{n,0}' a_{n,0} + \sum_{j=1}^{n} \sum_{m=0}^{N_j-1} \delta'_{j,m} \delta_{j,m}.
\end{equation}
The expression used above holds exactly when the latent vectors have been centered to zero mean, or the global mean \(\bar{z}\) is explicitly subtracted prior to the transform. Under this standard assumption, the approximation coefficients vanish, i.e., \(a_{n,0} = a'_{n,0} = 0\). The inner product then reduces precisely to the sum over all detail coefficients. This centering step is fully consistent with the variance-based energy decomposition employed throughout the paper, where the \(L^2\) norm is expressed as \(\|z\|^2 = \sum_{n=0}^{d-1} |z_n - \bar{z}|^2\). 

To connect these quantities to the hierarchical nature of world-model representations, we adopt the following statistical model. The vectors \(z\) and \(z'\) share identical macroscopic structure up to some scale \(k\) (i.e., \(\delta'_{j,m} = \delta_{j,m}\) for all \(j \le k\)), while their components at finer scales (\(j > k\)) behave as independent zero-mean fluctuations. This modelling assumption is justified by the vanishing-moment property of the db4 wavelet, which filters out smooth trends and isolates stochastic detail. Under this model the expected inner product in the wavelet domain is
\begin{equation}
\mathbb{E}[\langle z', z \rangle] = \mathbb{E}\Biggl[ \sum_{j=1}^{k} \sum_{m} \delta_{j,m} \delta'_{j,m} + \sum_{j=k+1}^{n} \sum_{m} \delta_{j,m} \delta'_{j,m} \Biggr].
\end{equation}
For the first sum (\(j \le k\)), the coefficients are identical, so the expectation reduces to
\begin{equation}
\sum_{j=1}^{k} \sum_{m} \delta_{j,m}^2 = \sum_{j=1}^{k} E_j.
\end{equation}
For the second sum (\(j > k\)), independence together with the zero-mean property implies that each cross term vanishes:
\begin{equation}
\mathbb{E}[\delta_{j,m} \delta'_{j,m}] = \mathbb{E}[\delta_{j,m}] \cdot \mathbb{E}[\delta'_{j,m}] = 0.
\end{equation}
Consequently, the expected inner product is strictly equal to the total energy contained in the first \(k\) scales:
\begin{equation}
\mathbb{E}[\langle z', z \rangle] = \sum_{j=1}^{k} E_j.
\end{equation}
In high-dimensional spaces there exist two fundamentally distinct concentration phenomena that play very different roles in the present analysis. The first is classical measure concentration on the norm. When the components of a latent vector $  z \in \mathbb{R}^d  $ are drawn independently from a fixed distribution, the law of large numbers implies that the variance of its squared Euclidean norm $  \|z\|^2  $ decays as $  1/d  $ with increasing dimension. Consequently, almost all such vectors lie on a thin spherical shell whose radius is sharply concentrated around its expectation value. This classical concentration is highly beneficial for our derivation: it justifies the approximation 
\begin{equation}\label{EucCon}
      \|z\| \approx \|z'\| \approx \sqrt{\sum_{j=1}^n E_j}
\end{equation}
that appears throughout the appendix. In effect, it guarantees that the normalization factor of the quantum state—the denominator of the overlap—is a stable, macroscopic quantity rather than a wildly fluctuating random variable, thereby providing a solid foundation for the wavelet energy scaling formulas used in the paper. The second phenomenon is quantum exponential concentration on the overlap. When the amplitude-encoded states $|\psi(z)\rangle$ and $|\psi(z')\rangle$ are evolved under sufficiently deep random circuits that form an approximate 2-design, the system undergoes strong quantum chaos and global scrambling in the Hilbert space $\mathbb{C}^{2^n}$. As a result, the relative angle between any two initially distinct states is effectively erased. Their overlap $\langle \psi(z') | \psi(z) \rangle$ is forced to collapse to a value of order $1/d$, and the fidelity kernel $\kappa(z,z')$ therefore decays exponentially with the number of qubits. This quantum concentration is detrimental to learning: it destroys the similarity structure (the numerator of the overlap) between different latent vectors, turning all correlations into essentially random noise and directly producing the formidable shot-noise wall observed in our variance calculations. These two forms of concentration act on entirely different geometric objects—one stabilises the length of individual vectors, while the other destroys the angle between pairs of vectors—and their interplay lies at the heart of the tension between representational quality and practical quantum kernel evaluation in the present work.

In the derivation below we approximated $  \mathbb{E}[\kappa] \approx (\mathbb{E}[S])^2  $. To justify this step rigorously, note that the un-normalised inner product can be decomposed as $  U = \langle z', z \rangle = A + B  $, where $  A = \sum_{j=1}^k E_j  $ is the deterministic contribution from the shared low-frequency macroscopic features, and $  B = \sum_{j=k+1}^n \sum_m \delta'_{j,m} \delta_{j,m}  $ is the random high-frequency cross term with $  \mathbb{E}[B] = 0  $.
Expanding the second moment gives
\begin{equation}
    \mathbb{E}[U^2] = \mathbb{E}[(A+B)^2] = A^2 + \mathbb{E}[B^2] = (\mathbb{E}[U])^2 + \mathbb{E}[B^2].
\end{equation}
The variance term $\mathbb{E}[B^2]$ can be evaluated explicitly. Because the detail coefficients at scales $  j > k  $ are independent, zero-mean, and identically distributed, all cross terms vanish and we are left with
\begin{equation}
\mathbb{E}[B^2] = \sum_{j=k+1}^n N_j V_j^2 = \sum_{j=k+1}^n \frac{E_j^2}{N_j},
\end{equation}
where $  V_j  $ is the per-coefficient variance at scale $  j  $ and $  N_j \approx 2^j  $ is the number of detail coefficients at that scale. The exponential growth of $  N_j  $ with scale $  j  $ provides strong suppression of the high-frequency fluctuations. Consequently, even in the volume-law regime where the per-scale energy $  E_j  $ grows with $  j  $, the ratio $  E_j^2 / N_j  $ decays sufficiently fast that $  \mathbb{E}[B^2]  $ becomes negligible compared with $  A^2  $ when the macroscopic cutoff scale $  k  $ is not too small. In the high-dimensional limit the relative contribution of the variance term is of order $  O(2^{-k})  $ or smaller, so that
\begin{equation}
\mathbb{E}[U^2] = (\mathbb{E}[U])^2 \bigl(1 + O(2^{-k})\bigr).
\end{equation}
Thus, to leading order we may safely replace $  \mathbb{E}[S^2]  $ by $  (\mathbb{E}[S])^2  $, which justifies writing $  \mathbb{E}[\kappa] \approx (\mathbb{E}[S])^2  $.
This result reveals a remarkable statistical property of the wavelet basis: the exponential proliferation of coefficients at fine scales acts as a built-in variance-reduction mechanism that stabilises the overlap estimator even when individual high-frequency components are large. It is precisely this property that allows the simple energy-ratio expression for the kernel to serve as a reliable approximation in the present analysis.

As the energy contributed by each dyadic scale obeys the scaling law \(E_j \propto 2^{j(1-2\alpha)}\) established in the main text, the asymptotic behaviour of \(\kappa\) with respect to the latent dimension, or equivalently the number of qubits \(n = \lfloor\log_2 d\rfloor\), is completely determined by the value of the wavelet scaling exponent \(\alpha\). When $\alpha < 1/2$, we set $\beta = 1 - 2\alpha > 0$ (note that this $\beta$ denotes a scaling index, not to be confused with the inverse temperature in Appendix~E). In this volume-law regime the energy per scale grows as $E_j \propto 2^{j\beta}$. In this volume-law regime the energy per scale grows as $E_j \propto 2^{j\beta}$. Because the exponent $\beta$ is positive, the total energy $\sum_{j=1}^n E_j$ is dominated by the contributions from the highest-frequency scales and therefore scales exponentially with the latent dimension, behaving asymptotically as
\begin{equation}
\sum_{j=1}^n E_j \approx 2^{n\beta} = d^\beta.
\end{equation}
By contrast, the shared low-frequency energy up to the cutoff scale $  k  $ remains a constant $  C  $ that does not grow with dimension. Consequently the expected overlap decays as $  S \approx C / d^\beta  $, and the fidelity kernel itself decays exponentially with the number of qubits $n = \lfloor \log_2 d \rfloor$
\begin{equation}
\mathbb{E}[\kappa] \approx \frac{C^2}{d^{2\beta}} = C^2 \cdot 2^{-2n(1-2\alpha)}.
\end{equation}
Thus, when $  \alpha < 1/2  $, the kernel $  \kappa  $ exhibits an exponential decay with system size. This rapid vanishing of the kernel is precisely what drives the explosion of measurement variance and produces the formidable shot-noise wall discussed in the main text.

At the critical value \(\alpha = 1/2\), the system reaches the Kolmogorov limit of variance equipartition. Here the exponent vanishes and the energy contributed by each dyadic scale becomes independent of the scale index \(j\), so that every scale carries the same average energy \(E_j = E_0\). The total energy is therefore the sum of \(n\) equal contributions and grows linearly with the number of scales:
\begin{equation}
\sum_{j=1}^n E_j \approx n E_0.
\end{equation}
The shared macroscopic energy up to the cut-off scale \(k\) is likewise proportional to \(k\). Consequently the expected overlap behaves as
\begin{equation}
S \approx \frac{k}{n},
\end{equation}
and the fidelity kernel decays only polynomially with the number of qubits:
\begin{equation}
\mathbb{E}[\kappa] \approx \left( \frac{k}{n} \right)^2 = O(n^{-2}).
\end{equation}
In this critical regime the concentration of the kernel is sufficiently mild that the number of measurement shots required to achieve a fixed precision \(\varepsilon\) remains polynomial in the number of qubits. Because the variance of the kernel scales as \(O(n^{-2})\), the required shot budget satisfies \(M = O(1/\varepsilon^2) = O(n^4) = O(\mathrm{poly}(\log d))\), which is computationally tractable.

To make the argument fully rigorous, we relax the idealised step-function model in which the two latents are assumed to be identical up to a sharp cutoff scale $  k  $ and completely uncorrelated thereafter. In realistic world-model representations the similarity between $  z  $ and $  z'  $ decays smoothly with increasing frequency. We therefore introduce a scale-dependent correlation coefficient $  \rho_j \in [0,1]  $ defined by
\begin{equation}
\mathbb{E}[\delta'_{j,m} \delta_{j,m}] = \rho_j \cdot \mathbb{E}[\delta_{j,m}^2] = \rho_j V_j.
\end{equation}
Here $\rho_j \approx 1$ for the macroscopic (low-$  j  $) scales that encode shared causal structure, while $  \rho_j \to 0  $ for the microscopic (high-$  j  $) scales where the features become effectively independent.
The expected inner product then takes the more general form
\begin{equation}
\mathbb{E}[\langle z', z \rangle] = \sum_{j=1}^n \rho_j E_j.
\end{equation}
At the critical point $  \alpha = 1/2  $ the energy per scale is constant, $  E_j = E_0  $, so the total energy grows linearly with the number of scales
\begin{equation}
\sum_{j=1}^n E_j = n E_0.
\end{equation}
The numerator of the overlap becomes $  E_0 \sum_{j=1}^n \rho_j  $. Because the correlation coefficients $  \rho_j  $ decay sufficiently rapidly at fine scales (a consequence of the finite macroscopic similarity between any two distinct real-world events), the infinite sum $  \sum_{j=1}^\infty \rho_j  $ converges to a finite constant $  K  $ that is independent of dimension. In the high-dimensional limit we therefore obtain
\begin{equation}
\mathbb{E}[S] \approx \frac{K}{n}.
\end{equation}
Squaring this expression yields the expected fidelity kernel
\begin{equation}
\mathbb{E}[\kappa] \approx \left(\frac{K}{n}\right)^2 = O(n^{-2}).
\end{equation}
The same scaling governs the variance of the kernel estimator, so the number of shots required to achieve a fixed precision $  \varepsilon  $ remains polynomial in the number of qubits: $  M = O(n^4) = O(\mathrm{poly}(\log d))  $.
This formulation eliminates the artificial sharp cutoff and replaces it with a physically continuous decay of correlation across scales. It shows that the polynomial decay of the kernel at $  \alpha = 1/2  $ is not an artefact of a toy model but a direct consequence of variance equipartition combined with the natural loss of microscopic similarity in high-dimensional representations. The derivation thereby rests on firm statistical and physical grounds while preserving the central conclusion that the critical equipartition regime yields a tractable measurement overhead. To make the convergence of the infinite sum $\sum_{j=1}^\infty \rho_j$ fully rigorous we introduce the following lemma:

\begin{lemma}[Exponential Decay of Cross-Scale Correlations in Hierarchical World Models]
Let \(z\) and \(z'\) be two latent vectors generated by the same hierarchical world model. Define the scale-dependent correlation coefficient at dyadic scale \(j\) by
\begin{equation}
\rho_j := \frac{\mathbb{E}[\delta'_{j,m} \delta_{j,m}]}{V_j},
\end{equation}
where \(V_j = \mathbb{E}[\delta_{j,m}^2]\) is the variance of the detail coefficients at that scale. Then \(\rho_j\) decays exponentially with scale:
\begin{equation}
\rho_j = (\gamma^2)^{j-1},
\end{equation}
where \(\gamma \in (0,1)\) is the information propagation factor of the model. Consequently, the infinite sum of correlation coefficients converges to a finite constant
\begin{equation}
K := \sum_{j=1}^\infty \rho_j = \frac{1}{1 - \gamma^2} < \infty,
\end{equation}
independent of the latent dimension \(d\).
\end{lemma}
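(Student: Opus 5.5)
The lemma cannot be deduced from anything earlier in the paper; it needs an explicit generative assumption. The phrase ``generated by the same hierarchical world model'' has not been made precise. So the first step is to fix a concrete hierarchical model in which $\gamma$ is a parameter, and then compute $\rho_j$ directly. I would take a multiplicative cascade on the dyadic wavelet tree. Two latents $z,z'$ share a common root (coarsest) detail coefficient $\delta_{1,m}=\delta'_{1,m}$. Passing from scale $j$ to scale $j+1$, each child coefficient is a mixture of the parent and fresh noise:
\begin{equation}
\delta_{j+1,m'} = \gamma\, c_{j}\,\delta_{j,m} + \sqrt{1-\gamma^2}\, c_j\, \xi_{j+1,m'} ,
\end{equation}
where $c_j = 2^{-\alpha}$ (or whatever factor matches $V_j\sim 2^{-2\alpha j}$) and the innovations $\xi,\xi'$ are i.i.d., zero mean, and independent between $z$ and $z'$. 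The weights $\gamma$ and $\sqrt{1-\gamma^2}$ make the variance recursion $V_{j+1}=c_j^2 V_j$ hold exactly. This keeps the wavelet scaling law used throughout Appendix~C intact and decouples it from the correlation structure.

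Next I would establish the correlation recursion. Let $C_j := \mathbb{E}[\delta'_{j,m}\delta_{j,m}]$. Expand the product of the two children and use independence of the innovations from each other and from the parents. Only the parent--parent term survives, so $C_{j+1} = \gamma^2 c_j^2 C_j$, which gives $\rho_{j+1} = C_{j+1}/V_{j+1} = \gamma^2\rho_j$. The initial condition $\rho_1=1$ follows because the roots are shared. Induction then gives $\rho_j=(\gamma^2)^{j-1}$. Note that $\gamma$ enters squared precisely because both $z$ and $z'$ independently retain a fraction $\gamma$ of the common ancestor. Interpreting $\gamma$ as an ``information propagation factor'' is then just the statement that a fraction $\gamma$ of each coefficient's amplitude is inherited from its parent.

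Finally, summing the geometric series over $j\ge 1$ gives $K=\sum_{j\ge1}\gamma^{2(j-1)} = 1/(1-\gamma^2)$, which is finite for $\gamma\in(0,1)$ and involves no $d$. This justifies $\mathbb{E}[S]\approx K/n$ in the preceding argument. For finite $n$, the truncated sum differs from $K$ by $\gamma^{2n}/(1-\gamma^2)$, which is exponentially small.

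The main obstacle is the modelling step, not the algebra. The exact equality $\rho_j=(\gamma^2)^{j-1}$ holds only under this specific stationary cascade with a constant $\gamma$ and a shared root. A general ``hierarchical world model'' would only give $\rho_j\le \gamma^{2(j-1)}$ under a contraction assumption, for instance a strong-data-processing or mixing inequality along the tree. I would therefore state the lemma as conditional on the cascade assumption. I would also remark that the conclusion actually needed, $K<\infty$ independent of $d$, survives the weaker inequality, because $0\le\rho_j\le\gamma^{2(j-1)}$ already bounds the sum by $1/(1-\gamma^2)$ via comparison.
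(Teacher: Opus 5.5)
Your argument is the paper's proof. Both use a coarse-to-fine cascade in which each coefficient keeps a fraction $\gamma$ of its parent plus a $\sqrt{1-\gamma^2}$-weighted innovation that is independent between $z$ and $z'$. This gives $\rho_j=\gamma^2\rho_{j-1}$ with $\rho_1=1$, and the geometric sum then yields $K=1/(1-\gamma^2)$.

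Your linear parent map $c_j\delta_{j,m}$ is actually sounder than the paper's generic variance-preserving $f$, for two reasons. Variance preservation alone does not give $\mathbb{E}[f(\delta')f(\delta)]=\mathbb{E}[\delta'\delta]$, and your map also makes the cascade consistent with $V_j\sim 2^{-2\alpha j}$. One correction is needed, though: for $V_{j+1}=c_j^2V_j$, and hence $\rho_{j+1}=\gamma^2\rho_j$, to hold exactly, the innovation at scale $j+1$ must have variance $V_j$. It cannot have a fixed variance i.i.d.\ across scales.
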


\begin{proof}
World models such as hierarchical VAEs or diffusion models generate latents through a coarse-to-fine Markovian cascade. At each scale \(j\), the detail coefficients are produced according to the linear generative step
\begin{equation}
\delta_{j,m} = \gamma \cdot f(\delta_{j-1,\lfloor m/2 \rfloor}) + \sqrt{1 - \gamma^2} \cdot \xi_{j,m},
\end{equation}
where \(f\) is a variance-preserving map, \(\gamma \in (0,1)\) is the fraction of macroscopic information retained from the coarser scale, and \(\xi_{j,m}\) is an independent zero-mean innovation (microscopic noise) with unit variance. The same process applies to the second latent \(z'\), but with an independent noise realisation \(\xi'_{j,m}\).

The cross term at scale \(j\) is therefore
\begin{equation}
\mathbb{E}[\delta'_{j,m} \delta_{j,m}] = \gamma^2 \cdot \mathbb{E}[f(\delta'_{j-1,\lfloor m/2 \rfloor}) f(\delta_{j-1,\lfloor m/2 \rfloor})] + (1-\gamma^2) \cdot \mathbb{E}[\xi'_{j,m} \xi_{j,m}].
\end{equation}
The second term vanishes because the innovations are independent. The first term, by the variance-preserving property of \(f\), equals \(\gamma^2\) times the covariance at the previous scale. Hence the correlation coefficients satisfy the recurrence
\begin{equation}
\rho_j = \gamma^2 \cdot \rho_{j-1},
\end{equation}
with initial condition \(\rho_1 = 1\) (the two latents share the same coarsest-scale structure). Solving this recurrence immediately yields
\begin{equation}
\rho_j = (\gamma^2)^{j-1}.
\end{equation}
Because \(\gamma < 1\), we have \(|\gamma^2| < 1\), so the infinite series
\begin{equation}
\sum_{j=1}^\infty \rho_j = \sum_{j=0}^\infty (\gamma^2)^j = \frac{1}{1 - \gamma^2}
\end{equation}
converges to a finite positive constant \(K\).
\end{proof}

This lemma shows that the hierarchical Markovian nature of world models forces cross-scale correlations to decay exponentially. When combined with variance equipartition at \(\alpha = 1/2\), it rigorously guarantees that the expected kernel \(\mathbb{E}[\kappa]\) decays only polynomially as \(O(n^{-2})\), thereby ensuring a tractable measurement budget. This lemma rests on the modelling assumption that world models generate latents through a Markovian coarse-to-fine cascade. While this description accurately characterises most contemporary hierarchical architectures, it is by no means universal. Should a future world model exhibit genuinely non-Markovian behaviour, the exponential decay of the correlation coefficients \(\rho_j\) derived above would no longer be guaranteed. To address this limitation and obtain a result that is independent of any particular generative mechanism, we now establish a more general lemma that relies solely on the linearity of Mallat’s algorithm and the vanishing-moment property of the db4 wavelet already used throughout the paper.

\begin{lemma}[Exponential Decay of Scale-wise Correlations via db4 Vanishing Moments]
Let \(z\) and \(z'\) be two latent vectors generated by the same world model, and let \(\delta_{j,m}\) and \(\delta'_{j,m}\) denote their detail coefficients at dyadic scale \(j\). Define the scale-dependent correlation coefficient
\begin{equation}
\rho_j := \frac{\mathbb{E}[\delta'_{j,m} \delta_{j,m}]}{V_j},
\end{equation}
where \(V_j = \mathbb{E}[\delta_{j,m}^2]\) is the variance of the detail coefficients at that scale. Then \(\rho_j\) decays exponentially with scale, and the infinite sum \(\sum_{j=1}^\infty \rho_j\) converges to a finite constant independent of the latent dimension.
\end{lemma}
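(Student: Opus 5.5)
The plan is to replace the Markovian generative assumption of the previous lemma with a purely structural decomposition, and then let the linearity of Mallat's algorithm and the four vanishing moments of \texttt{db4} do the work. Concretely, I would model the two latents as $z = g + \eta$ and $z' = g + \eta'$. Here $g$ is a shared macroscopic component, the discrete sampling of a function of H\"older/Sobolev regularity $r$ with $0 < r \le 4$. The fluctuations $\eta, \eta'$ are independent, zero-mean, independent of $g$, and carry the per-scale variance law $\mathbb{E}[\delta_{j,m}(\eta)^2] \asymp 2^{-2\alpha j}$ used throughout the paper. The key hypothesis is that the shared part is strictly smoother than the total signal, i.e.\ $r > \alpha - 1/2 = s$ in the language of Appendix~C. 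Without some such hypothesis the statement is false, since $z = z'$ gives $\rho_j \equiv 1$. I would therefore state it explicitly as the precise content of ``generated by the same world model''.

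First, by linearity of the filter-bank recursion $\delta_{j,k} = \sum_n g_n a_{j-1,2k+n}$, each detail coefficient is a fixed linear functional $\delta_{j,m}(z) = \langle z, \psi_{j,m}\rangle$. Hence $\delta_{j,m} = \langle g,\psi_{j,m}\rangle + \langle \eta,\psi_{j,m}\rangle$, and the same holds for $z'$. Expanding $\mathbb{E}[\delta'_{j,m}\delta_{j,m}]$, every term involving $\eta$ or $\eta'$ vanishes by independence and zero mean. This leaves $\mathbb{E}[\delta'_{j,m}\delta_{j,m}] = \mathbb{E}|\langle g,\psi_{j,m}\rangle|^2$. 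Second, I bound this using the vanishing moments $\sum_n (-1)^n n^p h_n = 0$ for $p \le 3$ from Appendix~C. On the support of $\psi_{j,m}$, which has length $\sim 7\cdot 2^{-j}$ in the fine-scale convention, Taylor-expand $g$ to cubic order. The polynomial part is annihilated exactly, and the remainder gives the standard estimate $|\langle g,\psi_{j,m}\rangle| \le C_g\, 2^{-j(r+1/2)}$, with $C_g$ depending only on the regularity norm of $g$. Third, for the denominator I use $V_j \ge \mathbb{E}\langle\eta,\psi_{j,m}\rangle^2 \ge c\, 2^{-2\alpha j}$. Dividing then yields
\begin{equation}
\rho_j \le \frac{C_g^2}{c}\, 2^{-j(2r+1-2\alpha)} = \frac{C_g^2}{c}\,\gamma_\ast^{\,j}, \qquad \gamma_\ast := 2^{-(2r+1-2\alpha)} \in (0,1).
\end{equation}
Summing the geometric series gives $\sum_j \rho_j \le C_g^2\gamma_\ast/[c(1-\gamma_\ast)]$, which recovers the form of the previous lemma with $\gamma^2$ replaced by $\gamma_\ast$.

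The step I expect to be the main obstacle is making the constants genuinely independent of $d$. For the first scales the bound can exceed $1$, but since $\rho_j \le 1$ by Cauchy--Schwarz (given equal variances), I would simply truncate with $\min(1,\cdot)$. The harder issues are that the discrete vectors are samples of continuous functions, so I must index scales from the coarse end and fix the sampling normalisation so that $C_g$ and $c$ do not drift with $d$, and that the boundary coefficients of the finite filter bank, where the Taylor argument fails, must be handled. I would first try periodic extension of $g$. If that breaks the regularity near the ends, I would instead restrict the statement to interior coefficients, which at scale $j$ form a fraction $1 - O(2^{j}/d)$ of all coefficients, and absorb the boundary terms into an $O(1)$ correction. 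The cap at four vanishing moments also means that the decay rate saturates at $r = 4$; this is harmless for the conclusion but should be noted.
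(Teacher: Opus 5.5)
Your proposal follows the paper's proof: both write $z$ and $z'$ as a common smooth component plus independent zero-mean fluctuations, use linearity of Mallat's algorithm to reduce $\mathbb{E}[\delta'_{j,m}\delta_{j,m}]$ to the wavelet energy of the common component, use the four \texttt{db4} vanishing moments to make that energy decay geometrically across scales, and sum a geometric series. Your version is more careful than the paper's, which plugs the decay of the numerator straight into $\rho_j$ without bounding $V_j$ from below, and which never states the hypothesis you correctly identify as necessary: that the shared component be strictly smoother than the full latent ($r > \alpha - 1/2$), since otherwise $z = z'$ would give $\rho_j \equiv 1$.
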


\begin{proof}
We model the two latents as \(z = s + \epsilon\) and \(z' = s + \epsilon'\), where \(s\) is the shared macroscopic smooth signal encoding the common causal structure, and \(\epsilon, \epsilon'\) are independent zero-mean microscopic fluctuations. Because the discrete wavelet transform implemented by Mallat’s algorithm is a strictly linear operation, the detail coefficients decompose additively:
\begin{equation}
\delta_{j,k}(z) = \delta_{j,k}(s) + \delta_{j,k}(\epsilon), \qquad
\delta_{j,k}(z') = \delta_{j,k}(s) + \delta_{j,k}(\epsilon').
\end{equation}
The cross term at scale \(j\) is therefore
\begin{equation}
\mathbb{E}[\delta'_{j,k} \delta_{j,k}] = \mathbb{E}[\delta_{j,k}(s)^2] + \mathbb{E}[\delta_{j,k}(\epsilon) \delta_{j,k}(\epsilon')] .
\end{equation}
The second expectation vanishes because \(\epsilon\) and \(\epsilon'\) are independent. Hence
\begin{equation}
\rho_j = \frac{\mathbb{E}[\delta_{j,k}(s)^2]}{V_j}.
\end{equation}
The key property is that the Daubechies-4 wavelet possesses four vanishing moments. By the wavelet regularity results of Mallat and Meyer \cite{meyer1992wavelets}, any sufficiently smooth signal \(s\) possessing positive Lipschitz regularity has its wavelet coefficients suppressed exponentially at fine scales. Specifically, the energy of the projected macroscopic component satisfies
\begin{equation}
\mathbb{E}[\delta_{j,k}(s)^2] \propto 2^{-2\nu (n-j)},
\end{equation}
where \(\nu > 0\) is the regularity index of \(s\) and \(n = \log_2 d\). Substituting into the definition of \(\rho_j\) yields the exponential decay
\begin{equation}
\rho_j \propto 2^{-2\nu (n-j)}.
\end{equation}
Re-indexing by \(m = n - j\) (counting from the coarsest scale) shows that the sum over all scales is a geometric series
\begin{equation}
\sum_{j=1}^n \rho_j \propto \sum_{m=0}^{n-1} 2^{-2\nu m}.
\end{equation}
Because the common ratio \(2^{-2\nu} < 1\), the infinite series converges as \(n \to \infty\) to the finite constant
\begin{equation}
K = \frac{1}{1 - 2^{-2\nu}} < \infty,
\end{equation}
independent of dimension. This establishes both the exponential decay of \(\rho_j\) and the finiteness of \(\sum_j \rho_j\).
\end{proof}

This lemma relies solely on the linearity of Mallat’s algorithm and the vanishing-moment property of the db4 wavelet already used throughout the paper. It shows that the hierarchical structure captured by the world model forces cross-scale correlations to decay exponentially, thereby guaranteeing that the expected kernel at the critical point \(\alpha = 1/2\) decays only polynomially as \(O(n^{-2})\).

When \(\alpha > 1/2\), the exponent satisfies \(1-2\alpha < 0\). The energy per scale then decays as \(E_j \propto 2^{j(1-2\alpha)}\) with a negative exponent, so the infinite sum of energies converges:
\begin{equation}
\sum_{j=1}^\infty E_j < \infty.
\end{equation}
Consequently the total energy \(\sum_{j=1}^n E_j\) approaches a finite constant independent of the latent dimension as \(n \to \infty\). At the same time, the shared macroscopic energy up to any fixed cutoff scale is dominated by the coarse scales where the correlation coefficients \(\rho_j\) remain close to one; this contribution likewise converges to a finite constant. It follows that both the numerator and the denominator of the overlap \(S\) remain bounded away from zero and infinity independently of dimension. Therefore \(S = \Theta(1)\) and the fidelity kernel satisfies
\begin{equation}
\kappa(z,z') = \Theta(1),
\end{equation}
exhibiting no decay with system size. This places the latent representation deep in the area-law regime, where the bipartite entanglement entropy remains bounded and classical tensor-network simulation remains efficient.

The equivalence established above demonstrates that the wavelet scaling exponent \(\alpha\) extracted from a world-model latent directly governs the concentration properties of its amplitude-encoded quantum kernel. In particular, the volume-law regime \(\alpha < 1/2\) that characterises the unstructured feature channels of current models produces an exponentially vanishing kernel and thereby explains the severe measurement overhead observed in practice.

\section{Thermodynamic Interpretation of the Wavelet Scaling Transition}
\label{app:thermodynamics}

To ground the preceding analysis in a concrete generative process, we introduce a simple one-dimensional Gaussian wavelet latent model. Let the latent dimension be $d = 2^n$. In the wavelet domain we have $n$ dyadic scales, and we assume that the detail coefficients $\delta_{j,m}$ at each scale $j$ are independent zero-mean Gaussians. The energy at scale $j$ is chosen to satisfy the scaling law of the main text
\begin{equation}
E_j = \sum_m \mathbb{E}[\delta_{j,m}^2] = 2^{j(1-2\alpha)}.
\end{equation}
This choice reproduces exactly the wavelet variance equipartition when $\alpha = 1/2$.
To model the statistical relationship between two latents $z$ and $z'$ that describe the same macroscopic event, we decompose each detail coefficient as
\begin{equation}
\delta_{j,m} = s_{j,m} + \epsilon_{j,m}, \qquad \delta'_{j,m} = s_{j,m} + \epsilon'_{j,m},
\end{equation}
where $s_{j,m}$ is the shared macroscopic signal (common causal structure) and $\epsilon_{j,m}$, $\epsilon'_{j,m}$ are independent zero-mean microscopic noises particular to each realisation. The correlation coefficient at scale $j$ is then defined by
\begin{equation}
\rho_j := \frac{\mathbb{E}[\delta'_{j,m} \delta_{j,m}]}{V_j},
\end{equation}
with $V_j = \mathbb{E}[\delta_{j,m}^2]$. Because the noises are independent, the expected inner product in the wavelet domain becomes
\begin{equation}
\mathbb{E}[\langle z', z \rangle] = \sum_{j=1}^n \rho_j E_j.
\end{equation}
The squared norm of each latent is $\|z\|^2 = \sum_{j=1}^n E_j$. Therefore the expected overlap is given exactly by the ratio of the weighted shared energy to the total energy
\begin{equation}
\mathbb{E}[S] = \frac{\sum_{j=1}^n \rho_j E_j}{\sum_{j=1}^n E_j}.
\end{equation}
This formula is the precise bridge between the classical wavelet energy distribution and the quantum kernel. It shows that the behaviour of $\kappa(z,z') = S^2$ is governed jointly by the scaling of the per-scale energies $E_j$ (controlled by $\alpha$) and the decay of the correlation coefficients $\rho_j$ (controlled by the hierarchical structure of the world model). When combined with the thermodynamic mapping developed earlier, it furnishes a complete and self-contained derivation of the kernel’s asymptotic decay in all three regimes.
This construction requires no artificial sharp cutoff and rests entirely on the linearity of the Mallat transform together with the statistical independence of microscopic innovations---both of which are fully consistent with the wavelet framework used throughout the paper.

The wavelet scaling exponent $\alpha$ admits a direct thermodynamic interpretation when each dyadic scale is viewed as an energy level of an effective harmonic oscillator. Specifically, we identify the energy $E_j$ at scale $j$ with the average occupation number of the $j$-th oscillator level $\epsilon_j = j$ (ignoring the zero-point energy). The Boltzmann factor $e^{-\beta \epsilon_j}$ then takes the form $e^{-\beta j}$. Matching this to the wavelet scaling law $E_j \propto 2^{j(1-2\alpha)}$ immediately yields the effective inverse temperature
\begin{equation}
\beta = \ln 2 \cdot (2\alpha - 1).
\end{equation}
This relation maps the three regimes of the wavelet transition onto the three fundamental thermodynamic regimes of a quantum system whose Hamiltonian is that of a collection of harmonic oscillators.

When $\alpha > 1/2$, we have $\beta > 0$ and therefore a positive absolute temperature $T > 0$. The Boltzmann factor decays exponentially with $j$, so the probability mass concentrates on the low-frequency (low-energy) modes. The total energy converges to a finite constant independent of dimension, the shared macroscopic structure dominates, and the overlap between any two latents remains $\Theta(1)$. The kernel exhibits essentially no decay with system size. This corresponds to the familiar area-law regime in which the amplitude-encoded quantum state remains weakly entangled and classical tensor-network simulation is efficient.

At the critical value $\alpha = 1/2$, the effective inverse temperature vanishes, $\beta = 0$, corresponding to infinite temperature $T \to \pm \infty$. Every oscillator level now carries exactly the same average energy, reproducing the Kolmogorov condition of variance equipartition across all scales. The total energy grows linearly with the number of scales, the overlap behaves as $S \approx K/n$ for some finite constant $K$, and the kernel decays only polynomially as $O(n^{-2})$. This is the infinite-temperature critical point at which the system sits precisely on the boundary between normal and inverted population statistics.

When $\alpha < 1/2$, we enter the regime $\beta < 0$, i.e., negative absolute temperature $T < 0$. Here the Boltzmann factor grows with $j$, so the highest-frequency (highest-energy) modes become exponentially more populated than the low-frequency ones. This is precisely the phenomenon of population inversion: the microscopic high-frequency fluctuations now dominate the macroscopic semantic structure. In physical systems such an inversion leads to stimulated emission and exponential amplification of coherent signals---the principle underlying the laser, first predicted by Einstein in 1917 \citep{einstein1917strahlung}. In the present context the same mechanism produces an exponential amplification of microscopic quantum noise. The overlap between distinct states collapses exponentially, the kernel decays as $\kappa \sim d^{-(1-2\alpha)}$, and the measurement variance explodes, giving rise to the formidable shot-noise wall.

In statistical mechanics, negative absolute temperatures characterize systems with bounded energy spectra---such as finite-dimensional spin lattices \citep{ramsey1956thermodynamics}---where higher-energy states are more populated than lower-energy ones. Crossing the $\alpha = 1/2$ boundary is equivalent to driving the latent representation into an informational analogue of this state. This thermodynamic perspective reveals a deep unity between representation learning, quantum complexity, and laser physics. In a physical laser, population inversion drives stimulated emission, amplifying microscopic noise into a macroscopic field \citep{einstein1917strahlung, haken1970laser}. Similarly, in the quantum kernel, this informational population inversion exponentially amplifies the microscopic quantum scrambling. The resulting exponential concentration of the quantum state overlap is the exact informational counterpart of runaway stimulated emission.

It is worth emphasizing that the Gaussian and independence assumptions used in this preceding construction are idealizations. In real-world signals, wavelet detail coefficients are typically heavy-tailed and exhibit inter-scale dependencies. Nevertheless, these simplifications do not compromise the validity of the asymptotic scaling laws. The quantities of central interest---the expected kernel $\mathbb{E}[\kappa]$ and the total energy---are obtained by summing an exponentially large number of microscopic contributions ($N_j \approx 2^j$). By the central limit theorem, the macroscopic behavior of such sums is governed almost exclusively by the first two moments of the underlying distribution. Higher-order dependencies affect only sub-leading corrections and become negligible in the high-dimensional limit. This independent zero-mean Gaussian model therefore serves as an analytically tractable mean-field approximation, whose predictions for the phase transition at $\alpha = 1/2$ are protected by statistical universality \citep{wilson1971renormalization}.

Ultimately, the $\alpha = 1/2$ equipartition threshold plays the role of an informational laser threshold. Below threshold ($\alpha > 1/2$), the emission is dominated by macroscopic low-frequency features, high-frequency fluctuations are exponentially suppressed, and the kernel remains order unity (area-law regime). Above threshold ($\alpha < 1/2$, negative temperature), microscopic modes are exponentially amplified, collapsing the state overlap. Precisely at threshold ($\alpha = 1/2$), the system exhibits critical slowing down and power-law behavior: the energy is equipartitioned, the effective temperature is infinite, and the kernel decays only polynomially as $O(n^{-2})$. By enforcing variance equipartition, one keeps the system at infinite temperature, preventing the catastrophic ``lasing'' of microscopic noise while maintaining a tractable measurement budget.

\section{Numerical Experiment Details}
\label{app:code}

This appendix provides the technical specifications of our numerical framework to ensure full reproducibility of the results presented in Section~\ref{sec:experiments}. All simulations were implemented in Python using an optimized research stack for quantum simulation and multi-resolution analysis \citep{bergholm2018pennylane, lee2019pywavelets}.

\subsection*{Experiment A: Wavelet Scaling Estimator and Real Data Analysis}

\paragraph{Synthetic Calibration}
The synthetic hierarchical latents are generated directly in the wavelet domain using the \texttt{PyWavelets} framework with a Daubechies-4 (\texttt{db4}) basis. This choice is motivated by the basis's four vanishing moments, which allow for the accurate resolution of Sobolev regularity up to $\alpha \geq 1$ without basis-induced artifacts. For a target dimension $d=2048$ and exponent $\alpha$, we sample detail coefficients at each dyadic level $i$ from a zero-mean Gaussian distribution $\mathcal{N}(0, 2^{-2\alpha i})$. The inverse discrete wavelet transform (\texttt{waverec}) is then applied to reconstruct the latent vector. The estimator fits the log-linear model $\log_2 \hat{V}_i = -2\alpha \cdot i + c$ via ordinary least squares (OLS) across all available detail levels.

\paragraph{Real VideoMAE Latents}
For the empirical analysis of world models, we utilize the \texttt{MCG-NJU/videomae-base} model pre-trained on video data. We extracted latents from a 50-clip subset of the UCF-101 dataset (\texttt{sayakpaul/ucf101-subset}). For each clip, 16 frames are sampled uniformly and processed into $224 \times 224$ tensors. We perform a dual-dimension analysis on the \texttt{last\_hidden\_state}. In the channel dimension analysis, we mean-pool across the 1,568 spatial-temporal tokens to yield a 768-dimensional feature vector, resulting in an estimated $\hat{\alpha} = -0.123 \pm 0.031$. In the spatial dimension analysis, we mean-pool across the 768 channels to yield a sequence of spatial-temporal tokens, yielding an estimated $\hat{\alpha} = 0.423 \pm 0.007$, demonstrating that spatial tokens reside near the variance equipartition threshold.

\subsection*{Experiment B: Quantum Kernel Variance Scaling}

The quantum circuit architecture utilizes a binary-tree ansatz on $n \in \{3, \dots, 8\}$ qubits. The topology consists of $n$ layers where $CZ$ gates entangle qubit pairs separated by power-of-two distances ($i, i+2^\ell$), mirroring the hierarchical structure of the dyadic wavelet decomposition. To approximate a Haar-random 2-design ensemble, all rotation parameters $\theta$ are sampled from $\text{Uniform}[0, 2\pi)$ for each of the 500 independent trials per system size. Kernels are evaluated via exact state-vector simulation using PennyLane's \texttt{default.qubit} device. The empirical log-log slope of $-1.881$ ($R^2 = 0.9990$) confirms the $\Theta(d^{-2})$ scaling. We also verify depth-saturation by testing depths $L \in \{1, \dots, n, 2n, 4n\}$ for a fixed $n=6$ ($d=64$), observing that the variance stabilizes once the logarithmic depth required for the 2-design marginal is reached.

\subsection*{Experiment C: Entanglement Entropy and Phase Transition}

Entanglement entropy is measured at $n=12$ ($d=4096$) by embedding synthetic latents with $\alpha \in [0.1, 1.0]$ into the computational basis. We perform a full SVD on the middle bipartition to obtain the singular value spectrum $\sigma_k$. To ensure numerical stability against $0 \log 0$ errors, we implement a singular value cutoff of $10^{-12}$. The observed entropy gradient $\partial S/\partial \alpha \approx -2.97$ for $\alpha \leq 0.5$ implies that the Matrix Product State (MPS) bond dimension grows exponentially as $\chi \sim 2^{3.0(0.5-\alpha)}$. The smooth crossover observed at $\alpha = 0.5$ is consistent with finite-size scaling theory, where the transition width $\Delta\alpha \sim 1/n$ is expected to be approximately $0.08$ for $n=12$.

\subsection*{Software and Reproducibility}

All experiments are reproducible on standard CPU architectures within approximately 90 minutes. Core dependencies include \texttt{PennyLane} v0.44.1, \texttt{PyWavelets} v1.8.0, \texttt{NumPy} v1.26, and the \texttt{Transformers} library for latent extraction.

\end{document}